\documentclass[12pt]{article}
\usepackage{amsfonts,amssymb,amsbsy,amsmath,latexsym,natbib,graphicx}
\usepackage{psfrag,epsfig,epsf,dsfont,xcolor} 
\usepackage{algorithm}
\usepackage{algpseudocode}
\usepackage{subfigure}
\usepackage[margin=1in]{geometry}
\usepackage{comment}
\usepackage{booktabs}

\newtheorem{proposition}{Proposition}
\newtheorem{theorem}{Theorem}
\newtheorem{lemma}{Lemma}

\newcommand{\Prob}[1]{\mathbb{P}\left({#1}\right)}
\newcommand{\Probs}[2]{\mathbb{P}_{#1}\left({#2}\right)}
\newcommand{\Expect}[1]{\mathbb{E}\left[{#1}\right]}

\newcommand{\Expects}[2]{\mathbb{E}_{{#1}}\left[{#2}\right]}
\newcommand{\Var}[1]{\mathsf{Var}\left[{#1}\right]}

\newcommand{\SD}[1]{\mathsf{SD}\left[{#1}\right]}

\newcommand{\zvec}{\underline{z}}
\newcommand{\Zvec}{\underline{Z}}

\newcommand{\phat}{\widehat{p}}
\newcommand{\ellhat}{\widehat{\ell}}
\newcommand{\Phat}{\widehat{P}}
\newcommand{\Pbar}{\overline{P}}
\newcommand{\pbar}{\overline{p}}

\newcommand{\Stil}{\widetilde{S}}

\newcommand{\Wtil}{\widetilde{W}}

\newcommand{\md}{\mathsf{d}}

\newcommand{\cF}{\mathcal{F}}
\newcommand{\cX}{\mathcal{X}}
\newcommand{\cY}{\mathcal{Y}}

\newcommand{\Vrel}{V_{\mathrm{rel}}}
\newcommand{\ess}{\mathfrak{s}}
\newcommand{\ecc}{\mathfrak{c}}
\newcommand{\ind}[1]{\mathbb{I}\left({#1}\right)}

\algdef{SE}[DOWHILE]{Do}{doWhile}{\algorithmicdo}[1]{\algorithmicwhile\ #1}%

\graphicspath{../Graphics/}

\title{Robust, partially alive particle Metropolis-Hastings via The Frankenfilter}

\author{Chris Sherlock\footnote{Lancaster University}, Andrew Golightly\footnote{Durham University} and Anthony Lee\footnote{University of Bristol}}
\date{}

\begin{document}

\maketitle
\vspace{-1cm}

\begin{abstract}
  When a hidden Markov model permits the conditional likelihood of an observation given the hidden process to be zero, all particle simulations from one observation time to the next could produce zeros. If so, the filtering distribution cannot be estimated and the estimated parameter likelihood is zero. The Alive Particle Filter addresses this by simulating a random number of particles for each inter-observation interval, stopping after a target number of non-zero conditional likelihoods. For outlying observations or poor parameter values, a non-zero result can be extremely unlikely, and computational costs prohibitive.
We introduce the Frankenfilter, a principled, partially alive particle filter that targets a user-defined amount of success whilst fixing lower and upper bounds on the number of simulations. The Frankenfilter produces unbiased estimators of the likelihood, suitable for pseudo-marginal Metropolis--Hastings (PMMH). We demonstrate that PMMH with the Frankenfilter is more robust to outliers and mis-specified initial parameter values than PMMH using standard particle filters, and is typically at least 2-3 times more efficient. We also provide advice for choosing the amount of success. In the case of $n$ exact observations, this is particularly simple: target $n$ successes. 
\end{abstract}


\section{Introduction}
\label{sec.intro}

Consider a hidden Markov model for a $d_x$-dimensional hidden process, $\{X_t\}_{t\in \mathcal{T}}$, with index set $\mathcal{T}\subseteq [0,T]$ and state space $\cX$; $d_y$-dimensional observations $y_i\in \cY$, $i=1,\dots,n$, of the process are made at times $t_i\in \mathcal{T}$, with $t_n=T$. The likelihood for the parameter, $\theta$, is
\[
P(y_{1:n}|\theta)=\int f(y_{1:n}|x_{\mathcal{T}},\theta) ~\md \widetilde{P}(x_{\mathcal{T}}|\theta),
\]
where $\widetilde{P}$ is the law of $x_{\mathcal{T}}$, 
and the conditional likelihood is
\[
f(y_{1:n}|x_{\mathcal{T}},\theta)=\prod_{i=1}^n f(y_i|x_{t_i},\theta).
\]
Suppose that $f(y_i|X_{t_i}=x,\theta)=0$ if some criterion involving $X_{t_i}$ and $y_i$ is not satisfied. A case of particular note is when each conditional likelihood is an indicator function: $f(y_i|X_{t_i}=x):\cY\times \cX\rightarrow \{0,1\}$, for example with exact or partial but noiseless observations of a Markov jump process (MJP), which has a discrete state space, $\cX\subseteq \mathbb{Z}^d$; alternative scenarios include, for any Markov process, approximate Bayesian computation with a kernel such as $1(\|y_i-x_{t_i}\|\le \epsilon)$ \cite[e.g.,][]{Sisson2018ABC}. 

The usual collection of particle filters, including the bootstrap filter \cite[]{GSS1993} and the auxiliary particle filter \cite[]{PittShep1999} fix the number of particles (simulations) from one observation time, $t_{i-1}$ to the next, $t_i$. Thus, when $f(y_i|X_{t_i}=x)=0$ is possible, the conditional likelihood of every particle could be $0$; in this case, none of the particles at time $t_{i}$ can be resampled, and the particle filter `dies out'. Consequently, it is not possible to estimate the filtering distribution and, within a particle MCMC or SMC$^2$ scheme, the estimate of the likelihood is $0$. 

When a traditional particle filter is used within particle Metropolis--Hastings, the ideal number of particles can vary from one parameter value to another; for example, a parameter value $\theta$ that fits less well with the data than the posterior mode, $\widehat{\theta}$, will lead to simulations from the prior, $\widetilde{P}(\cdot|\theta)$, of the hidden process that fit less well with the data, and this can lead to more Monte Carlo variability in the estimate of the log likelihood for that particular simulation, necessitating a larger number of particles. Since in real scenarios (nearly) all models are wrong, even under $\widetilde{P}(\cdot|\widehat{\theta})$, deviations between the path of the true process (which produced $y_{1:n}$) and probable values under the model, or a mis-specified observation model, could lead to some of the observations being much less likely (under the model) than others, necessitating more simulations for the related inter-observation intervals. All of these points have motivated the development of the Frankenfilter.

\textbf{Notation and simplification}. We suppress dependence on the parameter, $\theta$, from the notation. In Sections 1-3, purely to simplify the presentation, we assume integer observation times ($t_i=i$, $i=1,\dots,n$; $t_n=n=T$) and that the transition kernel and observation process are time homogeneous. Here and throughout, $p_0(\cdot)$ denotes the prior for the hidden process at time $0$, and the one step transition kernel from time $t-1$ to time $t$, conditional on $X_{t-1}=x$, is $p(\cdot|x)$. The likelihood for the data conditional on $\theta$ becomes $P(y_{1:T})$.

\subsection{The alive particle filter and compromises in practice}

\cite{LeGlandOudjane2004,LeGlandOudjane2005} propose and analyse a particle filter which, for each inter-observation interval, simulates repeatedly until a prespecified amount of `success', $\ess$, has been achieved, thus producing a random number of particles. In the special case of a product of binary (indicator) likelihoods, simulations are started from a state chosen uniformly at random from the successful states at time $t$ , $\ess$ is the target number of successes ($1$s) and the estimate of the likelihood for $T$ observations is $\Phat_{LO}:=\prod_{t=1}^T \frac{\ess}{M_t}$, where $M_t$ is the number of simulations  required to achieve $\ess$ "successes" at time $t$.

Whilst $\Phat_{LO}$ is consistent for the true likelihood, $p$, as $\ess\to \infty$, it is biased and is, therefore, unsuitable for particle Metropolis--Hastings. \cite{AmrKun2011} and \cite{Alive2015} consider the case of a product of indicator function conditional likelihoods, and show that $\Expect{\Phat_{\mathrm{alive}}}=P(y_{1:T})$, where
\[
\Phat_{\mathrm{alive}}:=\prod_{t=1}^T \frac{\ess-1}{M_t-1}.
\]
Clearly, we must set $\ess\ge 2$. 
Subject to strong conditions, \cite{Alive2015} also shows that to control the variance of $\Phat_{\mathrm{alive}}/p$, $\ess$ should grow linearly with $T$.

Each $M_t$ is now unbounded and can be exceedingly large when the success probability is small, leading to a large, and potentially even infinite, expected runtime of the algorithm. The current ad hoc solution to this problem \cite[]{DrovMcC2016,DrovPetMcC2016} fixes a (high) hard threshold $m_+\ge \ess$ on $M_t$ such that if this threshold is reached, $M_t=m_+$, simulation halts and  $\Phat_{\mathrm{alive}}=0$ is returned. The motivation is that if $M_t=m_+>>\ess$, the simulated process does not fit with the data, so the associated parameter value should have low posterior mass and, hence, should be rejected.  

The `alive particle filter with a hard threshold', where success/failure is binary (as in  \cite{Alive2015,DrovMcC2016,DrovPetMcC2016}) is detailed in Algorithm \ref{alg.AliveHard}. Whether the Markov process is discrete in time or continuous in time, we refer to the set of states needed to propagate from $x_{t-1}$ to (and including) $x_t$ as the \emph{path} $x_{(t-1,t]}$ (strictly, $x_{(t-1,t]\cap \mathcal{T}}$). The algorithm counts a particle at time $t$ as successful if $f(y_t|x_t)=1$ and as a failure if $f(y_t|x_t)=0$. It starts by sampling a particle from the prior, $p_0$, and propagating it to time $1$, repeating until $\ess$ time-$1$ successes have been achieved. For each subsequent time, $t=2,\dots,T$, first an \emph{ancestor} is sampled uniformly from the pool of the first $\ess-1$ successful particles at time $t-1$; this particle is then propagated to time $t$ and, if successful it is added to the pool of particles that were successful at time $t$; again, this process is repeated until $\ess$ time-$t$ successes have been achieved.

Because of the hard threshold, Algorithm \ref{alg.AliveHard} leads to a biased estimator of the likelihood and biased inference for parameters within pseudo-marginal MCMC or SMC$^2$. The first simulation study, in Section \ref{sec:death}, demonstrates this bias in practice.

\begin{algorithm}
  \caption{Alive Filter with Hard Threshold}\label{alg.AliveHard}
    \begin{algorithmic}[1]
      \Procedure{AliveThresh}{$\ess,m_+$}\Comment{target \# successes, max \#trials}
      \For{$t$ in $1:T$}
      \State $m\gets 0$.
      \Do
      \State $m\gets m+1$.
      \If{$t=1$}
      \State Sample $x_{t-1}^{m}\sim p_0$; set $a_{t-1}^{m}=m$. 
      \Else
      \State $a_{t-1}^{m}\gets$ \textbf{DiscreteUniform}($\{1,\dots,\ess-1\}$).
      \EndIf
      \State Sample $\widetilde{x}_t^{m}\sim p(\cdot|x_{t-1}^{a_{t-1}^{m}})$; calculate $s^{m}\in\{0,1\}$ from $(x_t^{m},y_t)$.
      \If{$s^m=1$}
      \State{$x_t^{\sum_{j=1}^m s^j}=\widetilde{x}_t^m$}
      \EndIf
      \doWhile{$m<m_+$ \textbf{and} $\sum_{j=1}^{m} s^j< \ess$}
      \If{$m=m_+$} \label{HitThresh}
      \State \Return $\phat=0$. \Comment{$\phat_t=0\implies\prod_{s=1}^T \phat_s=0$}
      \Else
      \State $\phat_t\gets \frac{\ess-1}{m-1}$.
      \EndIf
      \EndFor
      \State \Return $\phat=\prod_{s=1}^T \phat_s$.
    \EndProcedure
    \end{algorithmic}
    \end{algorithm}

Inspired by \cite{Shelley1818} and motivated by the desire for the robustness and adaptivity of the fully alive filter and the practical necessity of bounding the amount of computational work, we present the Frankenfilter, a partially alive particle filter that creates an unbiased estimator of the likelihood. Although motivated by issues with indicator likelihoods and MJPs, the Frankenfilter may be employed for its robustness and adaptivity across the same set of Markov process and conditional likelihood regimes as the standard particle filter.

\subsection{Our contribution and the structure of this article}

The Frankenfilter corrects the alive particle filter with a hard threshold so that the likelihood estimator is unbiased, and generalises the applicability in several directions. In particular, the Frankenfilter allows: 
(i) specification of a minimum number of simulations, $m_-$, as well as a maximum, $m_+$;
(ii) the conditional likelihood to be any non-negative  weight rather than an indicator; in particular, this permits the use of informed transition proposals, so called `bridges' \cite[]{GoliWilk15,GoliSher2019};
  (iii) any non-negative measure of success, rather than an indicator; this permits the amount of success to depend on the conditional likelihood, for example.

We also derive tuning advice for the alive particle filter and the Frankenfilter. Simulation studies validate the tuning advice and compare PMMH using the Frankenfilter with an equivalent static particle filter, exploring the relative robustness and efficiency. 

The Frankenfilter can be applied to any  Markov process observed at a discrete set of times, where observations $y_t$ have a known conditional likelihood, $f(y_t|X_t=x_t)$, given the state of the hidden process, $X_t=x_t$, $t=1,\dots,T$. We will repeatedly refer to two cases of particular interest, where the conditional likelihood is an indicator function:
\begin{itemize}
\item \emph{Full, exact observations}: At each observation time, the full state vector is observed; \emph{i.e.}, we observe $X_{t}=x_t$, $t=1,\dots,T$.
\item \emph{Partial, exact observations}: At each observation time, a subset of the components of the state vector are observed. Let $X_t=(X^*_t,Z_t)$; we observe $X^*_t=x^*_t$, $t=1,\dots,T$.
\end{itemize}
Section \ref{sec.Frank} presents the Frankenfilter in several stages of increasing generality and proves the unbiasedness of the likelihood estimator. Section \ref{sec.tune} derives advice on tuning the alive particle filter and the Frankenfilter and Section \ref{sec.sims} presents the simulation studies. The article concludes in Section \ref{sec.Discussion} with a discussion.

\section{The Frankenfilter}
\label{sec.Frank}
To maximise the accessibility of the key ideas, we build up from the simplest possible scenario, dealt with by Algorithm \ref{BaseFrank}, generalise it in several ways (Algorithm \ref{OneStepFrank}), proving unbiasedness in the main text, and conclude with the full Frankenfilter (Algorithm \ref{FullFrank}).

\subsection{Full, exact observations}
\label{sec.base.algorithm}

We first present the simplest Frankenfilter, Algorithm \ref{BaseFrank}, which estimates $p=\Probs{\widetilde{P}}{X=x}$ for some $x$ by repeatedly sampling $X\sim \widetilde{P}$ and counting the number of successes ($X=x$).  Here, for the $j$th simulation, $w^j:=w(x^j)\in\{0,1\}$ indicates failure or success; for our purposes, $w(x^j)=1(x^j=x)$. We fix the desired number of successes, $\ess\ge 2$, and an upper bound on the number of simulations, $m_+\ge \ess$. The algorithm stops when either the number of successes reaches $\ess$ or $m_+$ iterations have been performed.

If $\ess=1$ then success could be achieved after the first simulation, leading to an estimate of $0/0$, which is undefined. Strictly, the algorithm is valid, producing an unbiased estimator of the success probability when $\ess\ge 2$. Typically, we will ask for $\ess>>2$; see Section \ref{sec.tune}.

In the case of full-vector, exact observations of a Markov jump process and a known initial condition, Algorithm \ref{BaseFrank} would be applied separately for the initial interval (simulating $X_1|X_0=x_0$) and then each inter-observation interval (simulating $X_t|X_{t-1}=x_{t-1}$).

\begin{algorithm}
\caption{Basic One-step Frankenfilter}\label{BaseFrank}
\begin{algorithmic}[1]
    \Procedure{FrankenFilterBaseOne}{$\ess,m_+$}\Comment{target \# successes, max trials}
      \State $m\gets 0$.
\Do
      \State $m\gets m+1$.
      \State Sample $x^m$ from $\widetilde{P}$; calculate associated weight, $w^m\in\{0,1\}$.
    \doWhile{$m< m_+$ \textbf{and} $\sum_{j=1}^{m}w^j< \ess$}
      \If{$\sum_{j=1}^m w^j<\ess$} \label{BaseEnd}
      \Return $\phat=\frac{1}{m}\sum_{j=1}^m w^j$.
      \Else $~$\Return $\phat=\frac{1}{m-1}\sum_{j=1}^{m-1} w^j$.
      \EndIf
    \EndProcedure
\end{algorithmic}
\end{algorithm}


Algorithm \ref{OneStepFrank} generalises Algorithm \ref{BaseFrank}, by allowing a pre-specified minimum number of simulations, for importance-based proposals and for an arbitrary (non-binary) measure of success. It also separates the measure of success from the particle weight.

As with Algorithm \ref{BaseFrank} it can be applied in the case of full-vector, exact observations, with realisations of independent estimators obtained for each inter-observation interval. The proof that the estimator produced by Algorithm \ref{OneStepFrank} is unbiased is included in the main text as the intuition is helpful for understanding the structure of the estimator.

\cite{Pathak1976} describes an unbiased estimator when samples of a real-valued variate are taken sequentially and each sample has an associated cost, which can only be ascertained after the sample has been taken. Sampling stops when a specific cost budget is reached. \cite{Kremers1987} considers Bernoulli variables and assumes a fixed, constant cost for samples, but allows upper and lower bounds on the number of samples taken. Combining these ideas, we sample $X^j\sim Q$, for some proposal distribution, $Q$, with density/mass function $q$, then set $W^j:= w(X^j)\ge 0$, where $w(\cdot)$ and $Q$ satisfy $\Expects{Q}{w(X)}=P(y)$. We relabel ``cost'' as ``amount of success'', set $S^j:= s(X^j)\ge 0$ and specify upper and lower bounds, $m_+$  and $m_-$, on the number of samples. 
Each simulation brings a certain amount of success, and the simulation stops if a success threshold $\ess$ is reached or $m_+$ simulations have been performed. We summarise the relevant notation:

\begin{itemize}
\setlength\itemsep{0em}
\item $w^j= w(x^j)$: \emph{weight}, or estimate of the quantity of interest, from the $j$th simulation. 
\item $s^j= s(x^j)$: \emph{amount of success} for the $j$th simulation.
\item $\ess$: required \emph{total success}; simulation ends when the total success equals or exceeds $\ess$. 
\item $m_-,m_+$: the \emph{minimum} and \emph{maximum} number of simulations.
\end{itemize}
In general, $w(x^j)=f(y|x^j)\widetilde{p}(x^j)/q(x^j)$, where $\widetilde{p}$ is the prior mass/density for $x^j$; thus $\Expects{Q}{w(X)}=P(y)$. When $Q=\widetilde{P}$ and $f(y|x^j)$ is binary, then $w^j$ is binary as in Algorithm \ref{BaseFrank}. Algorithm \ref{OneStepFrank} is valid for any non-negative $s(\cdot)$; we use $s(x)=f(y|x)/\sup_{x'\in \cX}f(y|x')$, which is binary when $f$ is binary and is justified more generally in Appendix \ref{app.general.success}.


\begin{algorithm}
  \caption{General One-step Frankenfilter}\label{OneStepFrank}
    \begin{algorithmic}[1]
      \Procedure{FrankenFilterGenOne}{$\ess,m_-,m_+$}\Comment{success threshold, min \& max}
      \State Sample $x^{1:m_-}$ iid from $Q$; calculate weights, $w^{1:m_-}$ and measures of success $s^{1:m_-}$.
      \State $m\gets m_-$.
      \While{$m<m_+$ \textbf{and} $\sum_{j=1}^m s^j< \ess$}
      \State $m\gets m+1$.
      \State Sample $x^m$ from $Q$; calculate $w^m$ and $s^m$.
      \EndWhile
      \If{$m=m_-$ \textbf{or} $\sum_{j=1}^m s^j<\ess$} \label{GenOneEnd}
      \Return $\phat=\frac{1}{m}\sum_{j=1}^m w^j$.
      \Else $~$\Return $\phat=\frac{1}{m-1}\sum_{j=1}^{m-1} w^j$.
      \EndIf
    \EndProcedure
    \end{algorithmic}
    \end{algorithm}

\textbf{On }$m_-$: if $\mathrm{ess} \sup S^j\ge \ess$, setting $m_- \ge 1$ avoids an undefined estimate.

\begin{proposition} 
\label{prop.GenOneUnbiased}
The random variable $\Phat$ returned by Algorithm \ref{OneStepFrank} satisfies
  \[
\Expect{\Phat}=\Expect{W}=P(y).
\]
\end{proposition}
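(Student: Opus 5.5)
The plan is to rewrite the output of Algorithm~\ref{OneStepFrank} as an empirical mean over a random number of samples, $\Phat=\frac{1}{N}\sum_{j=1}^N W^j$. I will then show that, for each fixed $n$, the event $\{N=n\}$ is a symmetric function of $X^1,\dots,X^n$ once the later draws are fixed. Unbiasedness then follows from exchangeability, essentially a Rao--Blackwell/Pathak-type argument.

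\textbf{Step 1: an infinite i.i.d.\ sequence and the stopping time.} Extend the construction to an infinite i.i.d.\ sequence $X^1,X^2,\dots\sim Q$, with $W^j=w(X^j)$, $S^j=s(X^j)$ and partial sums $T_k=\sum_{j\le k}S^j$. Since $s\ge 0$, $T_k$ is non-decreasing. The number of simulations is $M=\min\{m\ge m_-: T_m\ge \ess\}\wedge m_+$. By line~\ref{GenOneEnd}, the number of terms actually averaged is $N=M-1$ if $M>m_-$ and $T_M\ge\ess$, and $N=M$ otherwise.

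\textbf{Step 2: characterise $\{N=n\}$ through the partial sums.} Using the monotonicity of $T_k$, I expect (checking cases) the following, assuming $m_-<m_+$:
\begin{align*}
\{N=m_-\}&=\{T_{m_-+1}\ge \ess\},\\
\{N=n\}&=\{T_n<\ess\le T_{n+1}\},\qquad m_-<n<m_+,\\
\{N=m_+\}&=\{T_{m_+}<\ess\}.
\end{align*}
If $m_-=m_+$, then $N\equiv m_-$ and the result is immediate. In every case $\{N=n\}$ depends on $X^1,\dots,X^n$ only through the symmetric quantity $T_n$, together with $S^{n+1}$. So, conditional on $X^{n+1}$ and on the unordered collection $\{X^1,\dots,X^n\}$, the indicator $1(N=n)$ is fixed. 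Meanwhile $(X^1,\dots,X^n)$ is uniformly permuted under this conditioning, by exchangeability.

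\textbf{Step 3: exchangeability and summation.} Step~2 gives
\[
\Expect{\tfrac1n\textstyle\sum_{j=1}^n W^j\,1(N=n)}=\Expect{W^1\,1(N=n)}
\]
for every $n\in\{m_-,\dots,m_+\}$: each $W^j$, $j\le n$, has the same joint law with $1(N=n)$. Summing over $n$, and using that the events $\{N=n\}$ partition the sample space, gives
\[
\Expect{\Phat}=\Expect{W^1\textstyle\sum_n 1(N=n)}=\Expect{W}=P(y).
\]
Interchanging sum and expectation is fine because the sum is finite and $W\ge0$.

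\textbf{Main obstacle.} The delicate part is Step~2, the bookkeeping of the boundary cases. These are $n=m_-$, where the two ``stop at $m_-$'' and ``stop at $m_-+1$ with success'' branches must merge into $\{T_{m_-+1}\ge\ess\}$, and $n=m_+$. The other requirement is that $N\ge 1$ always, so that $W^1$ is always included and the estimate is defined. This needs $m_-\ge 1$, as in the remark preceding the proposition. If instead $m_-=0$, as in Algorithm~\ref{BaseFrank}, it needs $\ess\ge2$ with binary success, so that $T_1<\ess$ forces $N\ge1$. I would handle that case separately, by noting that $\{N=0\}=\{T_1\ge\ess\}$ is then empty.
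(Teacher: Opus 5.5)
Your proof is correct and rests on the same mechanism as the paper's. The stopping rule makes the averaged weights exchangeable, so each average has the expectation of a single weight, and summing over cases recovers $\Expect{W}$. What differs is the partition and how the exchangeability is justified.

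The paper conditions only on the indicator $I$, i.e.\ on whether the threshold-crossing draw is discarded. It then asserts exchangeability of $W^{1:M}$ given $I=0$ and of $W^{1:M-1}$ given $I=1$, arguing informally about what is known about the pairs $(W^j,S^j)$. Because $M$ is still random given $I$, this tacitly requires a further conditioning on $M$.

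You instead condition on the exact number $N$ of averaged terms and write each $\{N=n\}$ as an explicit event in $T_n$ and $S^{n+1}$. This makes the symmetry in $X^{1:n}$ immediate and reduces the exchangeability claims to a one-line swap argument. It also merges the paper's $K=0$ case with the $K=1$, $M=m_-+1$ case into the single event $\{T_{m_-+1}\ge\ess\}$; both cases average $W^{1:m_-}$, and your boundary bookkeeping there is right.

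Your route is therefore the more rigorous rendering. The paper's $I$-based phrasing has the advantage of naming directly which particle is dropped, and that is what Algorithm~\ref{FullFrank} reuses to define the resampling pool.

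One small point: for $m_-=0$ you need not restrict to binary success. Your identity $\{N=0\}=\{T_1\ge\ess\}$ already shows that the general requirement is $\Prob{S^1\ge\ess}=0$, which is exactly the paper's remark on $m_-$.
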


\textbf{Proof}: Let $M$, $W^j$, $S^j$ be the random variables corresponding to $m$, $w^j$, $s^j$.

Let $K=0$ if $M=m_-$, $K=1$ if $M>m_-$ and $\sum_{j=1}^{m} S^j\ge \ess$ and $K=2$ if $\sum_{j=1}^{m_+} S^j<\ess$. 
The key concepts behind the proof of unbiasedness involve the exchangeability of the $(W^j,S^j)$ that is induced by the stopping conditions:

\begin{itemize}
\item If $K=0$, $W^{1:M}\equiv W^{1:m_-}$ are exchangeable because they were all sampled at the same time and all contributed to the fact that $\sum_{j=1}^{m_-} S^j \ge \ess$. 
\item If $K=2$, $w^{1:M}\equiv W^{1:m_+}$ are exchangeable because the only additional information we have on the $(S^j,W^j)$ pairs is that $\sum_{j=1}^{m_+} S^j<\ess$ (this implies that $\sum_{j=1}^m S^j<\ess$ for all $m<m_+$, so each of these earlier facts adds no extra information).
\item If $K=1$ then the $W^{1:M-1}$ are exchangeable since all we know about the $(S^j,W^j)$ pairs is that $\sum_{j=1}^{M-1}S^j<\ess$ and that adding $S^M$ brings the total to at least $\ess$.
\end{itemize}

Let $I=1$ if $K=1$ and $I=0$ if $K\in\{0,2\}$. Now,
\begin{align*}
  \Expect{\Phat|I=0}&=\Expect{\frac{1}{M}\sum_{j=1}^M W^j|I=0}=\Expect{W^1|I=0},
\end{align*}
where the second equality follows from the exchangeability of the $\{(W^j,S^j)\}_{j=1}^M$ given $I=0$. Similarly,
\[
\Expect{\Phat|I=1}=\Expect{\frac{1}{M-1}\sum_{j=1}^{M-1} W^j|I=1}=\Expect{W^1|I=1}.
\]
Combining, and then applying the tower law:
\[
\Expect{\Phat}=\Expect{\Expect{P|I}}=\Expect{\Expect{W^1|I}}=\Expect{W^1}.~~~\square
\]

\subsection{Partial and/or noisy observations}
\label{sec.general.alg}
The Frankenfilter (Algorithm \ref{FullFrank}) extends Algorithm \ref{OneStepFrank} to allow inference on multiple noisy and/or partial time-course observations. As in Algorithm \ref{alg.AliveHard}, each sampled particle is propagated from one observation time to the next, either in a single jump or via a path, and we allow for resampling. The notation $X_{(t-1,t]}$ and $x_{(t-1,t]}$ denotes the random path of the particle over $\mathcal{T}\cap (t-1,t]$ and its realisation. The relevant pool of time-$t$ ancestors depends on the outcome of the set of simulations from time $t-1$ to time $t$ and is specified in lines \ref{PoolStart} to \ref{PoolEnd}. If  $M=m_-$ ($K=0$ in the proof of Proposition \ref{prop.GenOneUnbiased}) or $\sum_{j=1}^{m_+} S^j<\ess$ ($K=2$) then the pool consists of all $m$ particles; otherwise ($K=1$) it is all except the particle that caused the cumulative success to hit the threshold.

Suppose that $X_0\sim p_0$ and $X_{(t-1,t]}|X_{t-1}=x\sim q(\cdot|x)$, an importance sampling (bridge) proposal. Let $a_{t-1}^j$ be the time $t-1$ ancestor of particle $x_t^j$. In general
\[
w^j=f(y_t|X_t=x_t^j)p(x_{(t-1,t]}^j|x_{t-1}^{a_{t-1}^j})/q(x_{(t-1,t]}^j|x_{t-1}^{a_{t-1}^j}).
\]

The algorithm fixes $\ess$ and $0\le m_-<m_+$; if $\ess\le \max_{t=1,\dots,T}\mathrm{ess}\sup(S_t^j)$, we insist $m_-\ge 1$.

\begin{algorithm}
  \caption{General Frankenfilter}\label{FullFrank}
    \begin{algorithmic}[1]
      \Procedure{FrankenFilter}{$\ess,m_-,m_+$}\Comment{success threshold, min \& max \#trials}
      \For{$t$ in $1:T$}
      \If{$t=1$}
      \State Simulate $x_0^{1:m_-}\sim p_0$; set $a_{t-1}^{1:m_-}\gets (1:m_-)$.
      \Else
      \State Sample $a_{t-1}^j\in\{1,\dots,m_{\mathrm{use}}\}$ with probability $\propto w_{\mathrm{use}}^{1:m_{\mathrm{use}}}$, $j=1,\dots,m_-$.
      \EndIf
      \State Sample $x_{(t-1,t)}^{j}\sim q(\cdot|x_{t-1}^{a_{t-1}^{j}})$; calculate $w^{j}$ and $s^{j}$, $j=1,\dots,m_-$.
      \State $m\gets m_-$.
      \While{$m<m_+$ \textbf{and} $\sum_{j=1}^m s^j< \ess$}
      \State $m\gets m+1$.
      \If{$t=1$}
      \State Sample $x_{t-1}^{m}\sim p_0$; set $a_{t-1}^{m}=m$.
      \Else
      \State Sample $a_{t-1}^{m}\in\{1,\dots,m_{\mathrm{use}}\}$ with probability $\propto w_{\mathrm{use}}^{1:m_{\mathrm{use}}}$.
      \EndIf
      \State Sample $x_{(t-1,t)}^{m}\sim q(\cdot|x_{t-1}^{a_{t-1}^{m}})$; calculate $w^{m}$ and $s^{m}$.
      \EndWhile
      \If{$m=m_-$ \textbf{or} $\sum_{j=1}^{m_t}s^j<\ess$} \label{PoolStart}
      \State $m_{\mathrm{use}}\gets m$
      \Else
      \State $m_{\mathrm{use}}\gets m-1$.
      \EndIf \label{PoolEnd}
      \State $w_{\mathrm{use}}^{1:m_{\mathrm{use}}}\gets w^{1:m_{use}}$; $\phat_t\gets\frac{1}{m_{\mathrm{use}}}\sum_{j=1}^{m_{\mathrm{use}}}w^j$
      \EndFor
      \State \Return $\phat=\prod_{t=1}^T \phat_t$.
    \EndProcedure
    \end{algorithmic}
    \end{algorithm}

\begin{theorem}
  \label{thrm.unbiased}
 The random variable $\Phat$ returned from Algorithm \ref{FullFrank} satisfies:
  \[  \Expect{\Phat(y_{1:T})}=P(y_{1:T}).
  \]
\end{theorem}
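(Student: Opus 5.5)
The plan is to prove the result by induction over observation times. The induction carries a stronger, test-function version of unbiasedness that lets the resampled pool at time $t$ play the role of an unbiased particle approximation of the unnormalised filter. Proposition \ref{prop.GenOneUnbiased} handles each single step.

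Let $\mathcal{F}_{t}$ denote the $\sigma$-algebra generated by everything simulated up to the end of the time-$t$ loop. This includes the pool $\{x_t^j,w^j\}_{j=1}^{m_{\mathrm{use}}}$ and $\phat_1,\dots,\phat_t$. For a non-negative test function $\varphi$, define the unnormalised filter
\[
\gamma_t(\varphi):=\int \varphi(x_t)\,p_0(x_0)\prod_{s=1}^t f(y_s|x_s)\,p(x_{(s-1,s]}|x_{s-1})\,\md x_{0:t},
\]
so that $\gamma_T(1)=P(y_{1:T})$. The claim proved by induction is
\[
\Expect{\Big(\prod_{s=1}^t \phat_s\Big)\,\frac{\sum_{j=1}^{m_{\mathrm{use},t}} w_t^j\varphi(x_t^j)}{\sum_{j=1}^{m_{\mathrm{use},t}} w_t^j}}=\gamma_t(\varphi).
\]
When $\sum_j w^j_t=0$, the left-hand integrand is $0$ because $\phat_t=0$. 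Since $\phat_t\cdot(\text{ratio})=\frac{1}{m_{\mathrm{use}}}\sum_j w_t^j\varphi(x_t^j)$, the quantity inside the expectation equals $\big(\prod_{s<t}\phat_s\big)\frac{1}{m_{\mathrm{use},t}}\sum_{j=1}^{m_{\mathrm{use},t}}w_t^j\varphi(x_t^j)$. Taking $t=T$ and $\varphi\equiv 1$ then gives the theorem.

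\textbf{Key step: a conditional version of Proposition \ref{prop.GenOneUnbiased}.} Condition on $\mathcal{F}_{t-1}$. At time $t$, every trial $j$, whether among the first $m_-$ or added later, draws an ancestor independently from the categorical distribution on the time-$(t-1)$ pool with probabilities $\propto w^{1:m_{\mathrm{use}}}_{\mathrm{use}}$, and then samples a path from $q$. Hence the pairs $(W^j\varphi(X_t^j),S^j)$ are i.i.d.\ given $\mathcal{F}_{t-1}$. The stopping rule and the pool selection of lines \ref{PoolStart}--\ref{PoolEnd} are exactly those of Algorithm \ref{OneStepFrank}. I would therefore apply Proposition \ref{prop.GenOneUnbiased} with the weight $w\varphi$ in place of $w$; its exchangeability argument goes through unchanged because the stopping depends only on the $S^j$. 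This gives
\[
\Expect{\tfrac{1}{m_{\mathrm{use},t}}\textstyle\sum_j W_t^j\varphi(X_t^j)\,\big|\,\mathcal{F}_{t-1}}=\Expect{W_t^1\varphi(X_t^1)|\mathcal{F}_{t-1}}=\frac{\sum_{i} w^i_{t-1}\,(K\varphi)(x^i_{t-1})}{\sum_i w^i_{t-1}},
\]
where $K\varphi(x):=\int f(y_t|x_t)p(x_{(t-1,t]}|x)\varphi(x_t)\,\md x_{(t-1,t]}$. This last equality is the usual importance-sampling identity for $q$.

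Multiplying by $\prod_{s<t}\phat_s$, which is $\mathcal{F}_{t-1}$-measurable, and applying the tower law turns the inductive hypothesis at $t-1$ with test function $K\varphi$ into the claim at $t$, because $\gamma_{t-1}(K\varphi)=\gamma_t(\varphi)$. The base case $t=1$ is the same computation with ancestors drawn i.i.d.\ from $p_0$, so Proposition \ref{prop.GenOneUnbiased} applies directly.

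\textbf{Main obstacle.} The delicate point is justifying the conditional i.i.d.\ structure. The number of trials $M_t$ is random, and the retained pool discards exactly the trial that crossed $\ess$. I need the exchangeability argument of Proposition \ref{prop.GenOneUnbiased} to hold conditionally on $\mathcal{F}_{t-1}$ with the augmented weight $w\varphi$, which requires that the stopping rule depends only on the $S^j$ and not on $\varphi$. I also have to handle the degenerate event where all pool weights are zero; there $\phat_{t-1}=0$ and both sides contribute nothing. Finally, the $m_-\ge1$ condition ensures $m_{\mathrm{use}}\ge1$, so the estimates are always defined.
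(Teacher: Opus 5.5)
Your proof is correct, but it is organised differently from the paper's. The paper (Appendix \ref{sec.proof.unbiased.gen}) argues backwards, in the style of Pitt et al. Lemmas \ref{lemma.base} and \ref{lemma.induction} show, by induction on $h$, that $\Expect{\Phat(y_{t-h:t}|y_{1:t-h-1})\mid\cF_{t-h-1}}$ is the $w$-weighted average, over the retained time-$(t-h-1)$ pool, of the likelihood-to-go $\Prob{Y_{t-h:t}=y_{t-h:t}\mid x^j_{t-h-1}}$; the argument is then closed at time $1$. You argue forwards, carrying unbiasedness for the unnormalised filter $\gamma_t(\varphi)$ for every non-negative $\varphi$.

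The core step is the same in both. Proposition \ref{prop.GenOneUnbiased} is applied conditionally on $\cF_{t-1}$ to an augmented weight $W\,g(X_t)$, followed by the importance-sampling identity over the ancestor-plus-path draw. This is legitimate because the pairs are conditionally i.i.d.\ and the stopping and pool rules depend only on the $S^j$. The paper's $g$ is exactly the likelihood-to-go, which is your step with one particular choice of $\varphi$.

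Your version gains three things. It uses one generic application of the one-step result per time step. It handles explicitly the zero-weight pool, where resampling is undefined but $\phat_{t-1}=0$, on which the paper is silent. It also gives the stronger by-product that $\bigl(\prod_{s\le t}\phat_s\bigr)$ times the weighted pool average of $\varphi$ is unbiased for $\gamma_t(\varphi)$ at every $t$. The paper's version needs only one test function per stage, and it delivers the exact conditional expectation of the remaining product of estimates given the current particle system.

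One small point: $m_{\mathrm{use}}\ge 1$ also holds almost surely when $m_-=0$, provided $\ess$ exceeds the essential supremum of the success. That is the other case the algorithm permits, so your remark about $m_-\ge 1$ covers only half of it.
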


Theorem \ref{thrm.unbiased} is proved in Appendix \ref{sec.proof.unbiased.gen}.

\textbf{Generalisations}: one might specify different budgets for each inter-observation interval, $\ess_{1:T}$, with $\ess_t$ based on a prior estimate of $p_t:=P(y_t|y_{1:t-1})$, for example. This would still lead to an unbiased estimator of the likelihood. 
In Section \ref{sec.tune}, we show that with full exact observations it is reasonable to set $\ess_1=\ess_2=\dots=\ess_T=\ess$. Simulations in Section \ref{sec.sims} demonstrate that this is still a viable strategy, even with partial or noisy observations.

\section{Tuning}
\label{sec.tune}
Until recently, advice for tuning algorithms based on particle Metropolis--Hastings has been to choose the computational cost so that at some particular, representative parameter value, $\widehat{\theta}$, $\Var{\log \Phat(Y_{1:T})}\approx 1\text{--}2$ \cite[]{PitSilGioKoh2012,SheThiRobRos2015,DouPitDelKoh2015}.

\cite{SheVB2024} points out that in the case where $\Prob{\Phat=0}>0$, $\Var{\log \Phat(y_{1:T})}$ is not even defined. This particular case occurs with exact or partial observations of an MJP where the likelihood is zero when the state does not fit with the observation. More generally, \cite{SheVB2024} shows that tuning so that the \emph{relative variance},
\begin{equation}
    \label{eqn.define.Vrel.full}
\Vrel:=
\frac{\Var{\Phat(y_{1:T})}}{\Expect{\Phat(y_{1:T})}^2}=\frac{1}{P(y_{1:T})^2}\Var{\Phat(y_{1:T})}\approx 1\text{--}2,
\end{equation}
or a little above $1$, 
is a more robust approach as it can highlight potential large or infinite polynomial moments of $\Phat$. When $\Phat$ is well behaved and strictly positive, a first-order Taylor approximation shows that the two approaches are essentially equivalent.

All of the above advice assumes that  $\Var{\Phat}$ or $\Var{\log \Phat}$ is inversely proportional to the computational cost. In the case of a standard particle filter, the variance is approximately inversely proportional to the number of particles, which is proportional to the cost. For the Alive particle filter, Taylor expanding \eqref{eqn.VrelExact} (see later) shows that the variance is approximately inversely proportional to $\ess$, which is proportional to the expected cost.

Achieving too small a variance simply increases the computational cost slightly; too large a variance can negatively impact the mixing of the algorithm, potentially substantially.

Increasing $m_-$ can only lead to an estimator that is an average of more (or the same number) of terms, thereby reducing the Monte Carlo variance of $\Phat$ and making any recommended $\ess$ conservative. We will typically set $m_+$ sufficiently large that this upper threshold is met infrequently. Thus, some of our analysis is for the Alive particle filter, anticipating any resulting guidance will also apply to the Frankenfilter. We then examine the effect of $m_+$ on the overall variance and derive guidance on setting $m_+$.


\subsection{Tuning for full, exact observations with binary weights}
\label{sec.TuneExact}
Initially, we allow for different success thresholds for each observation, $\ess_1,\dots,\ess_T$. In the case of exact observations, the alive particle filter estimator of the transition probability from time $t-1$ to time $t$ is
\[
\Phat_t=\frac{\ess_t-1}{M_t-1},
\]
where $M_t$ is the number of attempts needed to obtain $\ess_t$ successes. 

In the case of full, exact observations,
\[
\Vrel=\frac{1}{\prod_{t=1}^Tp_t^2}\Expect{\prod_{t=1}^T\Phat_t^2}-1
=
\left\{\prod_{t=1}^T\frac{1}{p_t^2}\Expect{\Phat_t^2}\right\}-1.
\]
The following result, proved in Appendix \ref{sec.proof.PropBound}, enables us to progress the analysis of $\Vrel$.
\begin{proposition}
  \label{prop.bound}
Let $P$ be the random variable, with $\Expect{P}=p$, returned by Algorithm \ref{BaseFrank} with $m_+=\infty$, and define $E_{\mathrm{rel}}:=\Expect{P^2}/p^2$.
For $\ess=2$, 
\[
E_{\mathrm{rel}}=-\frac{\log p}{1-p},
\]
for $\ess=3$,
\[
 1+\frac{1-p}{3}\le E_{\mathrm{rel}}=\frac{2}{1-p}+\frac{2p\log p}{(1-p)^2}\le 2-p,
\]
and for $\ess \ge 4$, 
  \[
  1+\frac{1-(1+\frac{2}{\ess-3})p}{\ess-2}<E_{\mathrm{rel}}<1+\frac{1-p}{\ess-2}.
  \]
\end{proposition}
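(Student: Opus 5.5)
With $m_+=\infty$, Algorithm \ref{BaseFrank} returns $P=(\ess-1)/(M-1)$, where $M$ is negative binomial: the number of Bernoulli($p$) trials needed to obtain $\ess$ successes. Write $q=1-p$. Then $\Prob{M=m}=\binom{m-1}{\ess-1}p^\ess q^{m-\ess}$ for $m\ge\ess$, and so
\[
E_{\mathrm{rel}}=\frac{(\ess-1)^2}{p^2}\sum_{m\ge \ess}\frac{1}{(m-1)^2}\binom{m-1}{\ess-1}p^\ess q^{m-\ess}.
\]
Set $n=m-1$ and use $\binom{n}{\ess-1}/n=\binom{n-1}{\ess-2}/(\ess-1)$. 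This gives
\[
E_{\mathrm{rel}}=(\ess-1)p^{\ess-2}\sum_{n\ge \ess-1}\frac{1}{n}\binom{n-1}{\ess-2}q^{n-\ess+1}=(\ess-1)\,\Expect{\frac{p^{-1}}{N}}\cdot p,
\]
with the remaining $1/n$ written as an expectation. Concretely, if $N$ is the number of trials needed for $\ess-1$ successes, then
\[
E_{\mathrm{rel}}=\frac{\ess-1}{p}\,\Expect{\frac{1}{N}}.
\]
The whole problem therefore reduces to computing or bounding $\Expect{1/N}$ for a negative binomial variable with $r=\ess-1$ successes.

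\textbf{Exact cases $\ess=2,3$.} Use $1/n=\int_0^1 u^{n-1}\,\md u$ and sum the generating function: $\Expect{u^N}=(pu/(1-qu))^{r}$. This gives
\[
\Expect{\frac{1}{N}}=\int_0^1 \frac{p^r u^{r-1}}{(1-qu)^r}\,\md u.
\]
For $r=1$ this equals $-p\log p/q$, so $E_{\mathrm{rel}}=-\log p/(1-p)$. For $r=2$ the integral $\int_0^1 u/(1-qu)^2\,\md u$ is elementary and yields $2/(1-p)+2p\log p/(1-p)^2$.

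\textbf{Bounds for $\ess=3$.} Both bounds follow from expanding $\log p=\log(1-q)=-\sum q^k/k$ as a power series in $q$. This gives $E_{\mathrm{rel}}=2\sum_{k\ge0}q^k/((k+1)(k+2))=1+q/3+q^2/6+\dots$. The lower bound $1+q/3$ is then immediate. For the upper bound, compare coefficients with $2-p=1+q$: each coefficient $2/((k+1)(k+2))\le 1$ for $k\ge1$.

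\textbf{Case $\ess\ge4$.} Here I would avoid exact integration and bound $\Expect{1/N}$ by Jensen-type or algebraic identities. Recall $\Expect{N}=r/p$ and $\Expect{1/(N-1)}=p/(r-1)$, the latter being unbiasedness of $(r-1)/(N-1)$, i.e.\ the alive estimator with $r$ successes. Also $\Expect{1/((N-1)(N-2))}=p^2/((r-1)(r-2))$ by the same binomial-shift trick.

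For the upper bound, write $1/N=1/(N-1)-1/(N(N-1))$ and bound $1/(N(N-1))\ge 1/((N-1)(N-2))-2/((N-1)(N-2)N)$. Alternatively, use
\[
\frac{1}{N}=\frac{1}{N-1}-\frac{1}{(N-1)(N-2)}+\frac{2}{N(N-1)(N-2)},
\]
with the last term bounded above via $N\ge r$, or iterate one more step to get the sharp sign. Multiplying by $r/p$ should give
\[
E_{\mathrm{rel}}<\frac{r}{r-1}-\frac{rp}{(r-1)(r-2)}+\cdots .
\]
The aim is to massage this into $1+\frac{1-p}{\ess-2}$ for the upper bound and $1+\frac{1-(1+2/(\ess-3))p}{\ess-2}$ for the lower bound. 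The lower bound should come from truncating the alternating expansion
\[
\frac{1}{N}=\sum_k (-1)^k \frac{k!}{(N-1)\cdots(N-1-k)}\quad\text{(a falling-factorial series)},
\]
whose terms have exact expectations $k!\,p^{k+1}/((r-1)\cdots(r-1-k))$; truncating an alternating series gives bounds of the appropriate sign. The main obstacle is this last step: checking that the partial sums of the factorial series bracket $\Expect{1/N}$ with the stated constants, and that they require $r-3\ge1$ (hence $\ess\ge4$ for the $\ess-3$ denominator). If direct truncation does not give exactly the stated forms, I would fall back on a monotonicity argument in $p$, comparing derivatives at $p=0$ and $p=1$, where both bounds are tight ($E_{\mathrm{rel}}\to1$ as $p\to1$).
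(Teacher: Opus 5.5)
Your reduction $E_{\mathrm{rel}}=\frac{\ess-1}{p}\Expect{1/N}$ is correct, where $N$ is the number of trials needed for $r=\ess-1$ successes. (The intermediate expression $(\ess-1)\Expect{p^{-1}/N}\cdot p$ is off by a factor $1/p$, but your ``concretely'' line is right.) The integral representation and the exact formulae for $\ess=2,3$ are also correct. Two steps are not yet proofs, though both close quickly.

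\textbf{Upper bound for $\ess=3$.} Comparing coefficients with $1+q$ cannot work, because $1+q$ has zero coefficient on $q^k$ for $k\ge2$. Knowing only that each coefficient is at most $1$ gives $E_{\mathrm{rel}}\le 1+q/(1-q)$. What you need is that the coefficients for $k\ge1$ sum to one: $\sum_{k\ge1}\frac{2}{(k+1)(k+2)}=1$ by telescoping. Combined with $q^k\le q$ for $k\ge1$, this gives $E_{\mathrm{rel}}\le 1+q=2-p$.

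\textbf{The case $\ess\ge4$.} This is not actually an obstacle: the identity you wrote down already finishes it. Take expectations in $\frac1N=\frac{1}{N-1}-\frac{1}{(N-1)(N-2)}+\frac{2}{N(N-1)(N-2)}$, using $\Expect{1/(N-1)}=p/(r-1)$ and $\Expect{1/\{(N-1)(N-2)\}}=p^2/\{(r-1)(r-2)\}$ (finite since $r\ge3$). This gives
\[
E_{\mathrm{rel}}=\frac{r}{r-1}-\frac{rp}{(r-1)(r-2)}+\frac{2r}{p}\Expect{\frac{1}{N(N-1)(N-2)}}.
\]
Dropping the positive last term gives exactly $1+\{1-(1+\frac{2}{\ess-3})p\}/(\ess-2)$.

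For the upper bound, apply $N\ge r$ to the factor $1/N$ only. The last term is then at most $\frac{2}{p}\Expect{\frac{1}{(N-1)(N-2)}}=\frac{2p}{(r-1)(r-2)}$, and
\[
\frac{r}{r-1}-\frac{rp}{(r-1)(r-2)}+\frac{2p}{(r-1)(r-2)}=1+\frac{1-p}{r-1}=1+\frac{1-p}{\ess-2}.
\]
The inequality is strict for $p<1$, since then $\Prob{N>r}>0$.

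By contrast, the alternating-series idea is the wrong tool for the upper bound. Truncating the factorial series one term later needs $\ess\ge5$. It also gives a bound that exceeds $1+(1-p)/(\ess-2)$ whenever $p>1-3/(\ess-1)$, so it does not imply the statement. The monotonicity fallback is unnecessary.

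With these repairs, your argument is the paper's argument in different coordinates. The paper keeps $X=M-1$ and sandwiches $1/X^2$ between $\frac{1}{X(X-1)}-\frac{1}{X(X-1)(X-2)}$ and $\frac{1}{X(X-1)}-\frac{1-2/(\ess-1)}{X(X-1)(X-2)}$, the latter via $X\ge \ess-1$. It then uses the exact falling-factorial moments of $X$. Your absorption identity is the change of measure $\Expect{g(X)/X}=\frac{p}{\ess-1}\Expect{g(N)}$, which maps the two sides of that sandwich exactly onto your two bounds. Your version makes the structure (an exact identity plus $N\ge r$) more transparent.

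For $\ess=2,3$, the paper matches the series directly to $-\log(1-q)$. It gets the $\ess=3$ bounds by bounding the tail $\sum_{x\ge2}q^x/x$ between $q^2/2$ and $q^2/2+q^3/(3p)$. Your generating-function integral and power series in $q$ are equivalent.
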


First, when $\ess=2$, we cannot control $E_{\mathrm{rel}}$ as $p\downarrow 0$. For $\ess \ge 3$, however, $E_{\mathrm{rel}}$ is bounded, and for large $\ess$, it behaves as $1+(1-p)/(\ess-2)$.

Using Proposition \ref{prop.bound}, then that each $\ess_t$ is typically large, and finally that in the cases where one might consider using the alive particle filter, $p_t$ is typically small,
\[
\left\{\prod_{t=1}^T\frac{1}{ p_t^2}\Expect{\Phat_t^2}\right\}
\approx
\prod_{t=1}^T\left\{1+\frac{1-p_t}{\ess_t-2}\right\}
\approx
\exp\left(\sum_{t=1}^T\frac{1- p_t}{\ess_t-2}\right)
\approx \exp\left(\sum_{t=1}^T\frac{1}{\ess_t-2}\right).
\]
Fixing this, the expected cost is $\propto \sum_{t=1}^T\ess_t$, which is minimised by setting $\ess_1=\dots =\ess_T=\ess$. Then,
\begin{equation}
  \label{eqn.VrelExact}
\Vrel
\approx
\exp\left(\frac{T}{\ess-2}\right)-1\approx 1\text{--}2,
\end{equation}
for good mixing. Note that, requiring $\Vrel=e-1$ suggests setting $\ess=2+T\approx T$.

It is worth emphasising that, unlike the case of the standard particle filter, this tuning advice does \emph{not} require a choice of a representative parameter value. The guidance is valid whatever the system and whatever the true (small) transition probabilities are.

\subsection{Tuning for partial, exact observations}
\label{sec.tune.partial}
We consider the alive particle filter in the case of partial, exact observations and multinomial resampling, and anticipate similar behaviour for the Frankenfilter provided $m_+$ is large. 

Let the full state at time $t$ be $X_t=(X^*_t,Z_t)$, where $X^*_t$ is observed precisely and $Z_t$ is unobserved. 
At time $t-1$ there are $\ess-1$ particles (with values $z_{t-1}^1,\dots z_{t-1}^{\ess-1}$). We sample $Z_{t-1}$ uniformly at random from these and propagate $X_{t-1}=(x^*_{t-1},Z_{t-1})$ forwards until time $t$, registering the simulation as a success if $X^*_t=x^*_t$, and a failure otherwise. This is repeated until we achieve $\ess$ successes at time $t$. The first $\ess-1$ successes provide $\zvec_t:=(z_t^1,\dots,z_t^{\ess-1})$. 

Define $p_{t}(z)=\Prob{X_t=x_t|X_{t-1}=(x^*_{t-1},z)}$ and for $j=1,2,\dots,\ess-1$, let  $\pbar_t=\frac{1}{\ess-1}\sum_{j=1}^{\ess-1}p_{t}(z_{t-1}^j)$; for the random variable, we write $\Pbar_t=\frac{1}{\ess-1}\sum_{j=1}^{\ess-1}p_{t}(Z_{t-1}^j)$.

For $t\in \{1,\dots,T\}$, let $\cF_t$ be the results of all simulations up to and including time $t$ (the filtration at time $t$). All the relevant information from $\cF_{t-1}$ that will inform the simulated values and estimated success probability at time $t$, as well as the evolution beyond time $t$ is contained in $\Zvec_{t}$.

There are two special aspects of our scenario that facilitate helpful analytical insight.

\textbf{S1}: Because we are choosing the ancestor of the $j$th simulation uniformly at random, when simulating $\Phat_t$, the distribution of the number of trials until $\ess$ successes is indistinguishable from the distribution of the number of trials if each were a simple Bernoulli trial with success probability $\Pbar_t$.

\textbf{S2}: Conditional on $\cF_{t-2}$, $\Phat_{t-1}$ and $(\Pbar_t,\dots, \Pbar_T)$ are independent as the forward simulations from time $t-2$ to time $t-1$ are mutually independent. In particular, the number of failures, $F_{t-1}:=|\{i:{X^{*i}_{t-1}}\ne x^*_{t-1}\}|$ is independent of the state values at the successes, and, therefore, of $\zvec_{t-1}$. Now, $\Phat_{t-1}=(\ess-1)/(\ess+F_{t-1}-1)$, whereas $\Pbar_t$ is a deterministic function of $\Zvec_{t-1}$ and the subsequent evolution only depends on $\Zvec_{t-1}$, not $F_{t-1}$.

As before, $\Vrel$ is as defined in \eqref{eqn.define.Vrel.full};
now, however, we only know that $p=\Expect{\prod_{t=1}^T \Phat_t}$. 
Using S1 and Proposition \ref{prop.bound} we have
\begin{equation}
\Expect{\Phat_t^2|\cF_{t-1}}
=
\Expect{\Phat_t^2|\Zvec_{t-1}}
=
\Expect{\Phat_t^2|\Pbar_t}
\le
\left(1+\frac{1}{\ess-2}\right)\Pbar_t^2,~~~t=1,\dots,T,
\label{eqn.usefulInterim}
\end{equation}
and, again from Proposition \ref{prop.bound}, when $\ess$ is large and each $p_t$ is small, we expect this bound to be close to an equality. Theorem \ref{thrm.partialVariance}, below, is proved in Appendix \ref{sec.prove.partialVariance}.

\begin{theorem}
\label{thrm.partialVariance}
In the case of the alive particle filter on partial, exact observations with the same target number of successes, $\ess$, for each observation, 
\begin{equation}
  \label{eqn.Vrel.gen}
\Vrel
\le 
\exp\left\{\frac{T}{\ess-2}\right\}
\frac{\Expect{\prod_{t=1}^T\Pbar_t^2}}{\Expect{\prod_{t=1}^T\Pbar_t}^2}-1
=
\exp\left\{\frac{T}{\ess-2}\right\}
\frac{1}{P(y_{1:T})^2}\Expect{\prod_{t=1}^T\Pbar_t^2}-1
.
\end{equation}
\end{theorem}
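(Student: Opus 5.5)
We bound $\Expect{\prod_{t=1}^T\Phat_t^2}$ by peeling off one time index at a time, from $t=T$ backwards, using the tower property together with \eqref{eqn.usefulInterim} and the conditional independence \textbf{S2}. We then treat the denominator separately, via unbiasedness.

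\emph{Step 1: the last factor.} Since $\Phat_1,\dots,\Phat_{T-1}$ and $\Pbar_T$ are $\cF_{T-1}$-measurable,
\[
\Expect{\prod_{t=1}^T\Phat_t^2}=\Expect{\prod_{t=1}^{T-1}\Phat_t^2\,\Expect{\Phat_T^2|\cF_{T-1}}}\le \left(1+\frac{1}{\ess-2}\right)\Expect{\prod_{t=1}^{T-1}\Phat_t^2\,\Pbar_T^2}.
\]

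\emph{Step 2: the inductive claim.} For $k=T,T-1,\dots,1$ we show
\[
\Expect{\prod_{t=1}^T\Phat_t^2}\le\left(1+\frac{1}{\ess-2}\right)^{T-k+1}\Expect{\prod_{t=1}^{k-1}\Phat_t^2\prod_{t=k}^T\Pbar_t^2}.
\]
For the inductive step we condition on $\cF_{k-2}$. By \textbf{S2}, $\Phat_{k-1}$ is conditionally independent of $(\Pbar_k,\dots,\Pbar_T)$ given $\cF_{k-2}$. The factors $\Phat_1,\dots,\Phat_{k-2}$ are $\cF_{k-2}$-measurable. The conditional expectation of $\Phat_{k-1}^2\prod_{t\ge k}\Pbar_t^2$ therefore factorises as $\Expect{\Phat_{k-1}^2|\cF_{k-2}}\,\Expect{\prod_{t\ge k}\Pbar_t^2|\cF_{k-2}}$. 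Applying \eqref{eqn.usefulInterim} to the first factor gives at most $(1+\frac{1}{\ess-2})\Pbar_{k-1}^2$, and $\Pbar_{k-1}$ is $\cF_{k-2}$-measurable, so it can be pulled back inside. Reassembling with the tower law completes the step.

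At $k=1$ we obtain $(1+\frac1{\ess-2})^T\Expect{\prod_t\Pbar_t^2}$. Then $1+x\le e^x$ gives the factor $\exp\{T/(\ess-2)\}$.

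\emph{Step 3: the denominator.} Divide by $P(y_{1:T})^2$ and subtract $1$ to get the right-hand form of the bound. For the middle form, we also need $\Expect{\prod_t\Pbar_t}=P(y_{1:T})=\Expect{\prod_t\Phat_t}$. This follows by the same backward peeling, using first moments: by \textbf{S1}, $\Expect{\Phat_t|\cF_{t-1}}=\Pbar_t$, since the alive estimator is unbiased for the Bernoulli success probability. By \textbf{S2}, the factor $\Phat_{k-1}$ can again be replaced by its conditional mean $\Pbar_{k-1}$. Combined with the unbiasedness already established, this yields the equality in \eqref{eqn.Vrel.gen}.

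\emph{Main obstacle.} The delicate step is the factorisation in Step 2. It needs \textbf{S2} stated precisely: given $\cF_{k-2}$, the failure count $F_{k-1}$, which determines $\Phat_{k-1}$, is independent of $\Zvec_{k-1}$ and of all subsequent evolution, which depends only on $\Zvec_{k-1}$. I would make this rigorous as follows. Conditionally on $\cF_{k-2}$, the trials are i.i.d. draws (ancestor uniform, then propagation). The successful states therefore form an i.i.d. sample from the success-conditioned law, independent of the geometric-type failure counts. The later quantities $\Pbar_k,\dots,\Pbar_T$ are functions of $\Zvec_{k-1}$ and fresh randomness only.
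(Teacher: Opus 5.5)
Your proposal is correct and follows essentially the same route as the paper: the backward recursion that conditions on the previous filtration, factorises via S2, applies \eqref{eqn.usefulInterim} and finishes with $1+x\le e^x$ is exactly the paper's inductive inequality \eqref{eqn.inductive}, with your $k-1$ playing the role of the paper's $t$. Your Step 3 goes slightly further than the paper: it derives $\Expect{\prod_t\Pbar_t}=P(y_{1:T})$ by the same peeling at the level of first moments, which supplies the equality between the two forms of the bound that the paper's proof leaves implicit.
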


When $\ess$ is large, we expect \eqref{eqn.Vrel.gen} to be close to an equality. Rewriting this gives, 
\[
\Vrel
\approx
\exp\left\{\frac{T}{\ess-2}\right\}-1+\exp\left\{\frac{T}{\ess-2}\right\}\frac{\Var{\prod_{t=1}^T\Pbar_t}}{P(y_{1:T})^2}.
\]
We see that for a given $\ess$, the relative variance is larger than in the case of exact observations. So, to achieve a given target relative variance, $\ess$ must increase. Since each $\Pbar_t$ is an average of $\ess-1$ (albeit correlated) probabilities, when starting from $\ess \approx T$, we anticipate $\Var{\prod_{t=1}^T \Pbar_t}$  decreasing roughly in proportion to $1/\ess$ as $\ess$ increases. Appendix \ref{sec.more.partial} provides further analysis in the case of very large $\ess$.

\subsection{Choice of $m_+$}
\label{sec.choose.mplus}
We now show that choosing $m_+$ such that $m_+p =\kappa \ess$ for some moderately sized $\kappa$, such as $\kappa=10$, is sufficient to control the additional extra variance that comes from the introduction of $m_+$ into the alive particle filter. We restrict attention to the most obvious application of Algorithm \ref{OneStepFrank}, where $s^j=w^j/c$, $j=1,2,\dots$, for some $c>0$. Propositions \ref{prop.GenOneStepBound} and \ref{prop.BoundProbMissS} are proved in Appendix \ref{sec.proof.PropBound}.

\begin{proposition}
  \label{prop.GenOneStepBound}
Let $c>0$ and let $P$ be the random variable, with $\Expect{P}=p$, returned by Algorithm \ref{OneStepFrank} in the special case that $S^j= \frac{1}{c}W^j$, $j=1,2,\dots$. If $0\le W^j\le w_*$, $j=1,2,\dots$ and $c\ess >2w_*$,
\[
\frac{1}{p^2}\Expect{P^2}
\le
1+\frac{w_*}{c\ess-2w_*}+
\frac{c\ess}{m_+ p}\Prob{\sum_{j=1}^{m_+}S^j<\ess}.
\]
\end{proposition}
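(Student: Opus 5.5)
The plan is to split $\Expect{P^2}$ according to the stopping regime $K$ from the proof of Proposition \ref{prop.GenOneUnbiased}. The case $K=2$ should produce the last term of the bound. The cases $K\in\{0,1\}$ should produce the term $1+w_*/(c\ess-2w_*)$, by comparison with an exchangeable U-statistic in the spirit of the alive-filter analysis behind Proposition \ref{prop.bound}. Throughout, $S^j=W^j/c$ means the stopping rule is ``$\sum_j W^j\ge c\ess$'', and each increment satisfies $0\le W^j\le w_*$.

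\textbf{Case $K=2$.} Here $P=\frac{1}{m_+}\sum_{j=1}^{m_+}W^j$ and $\sum_{j=1}^{m_+}W^j<c\ess$, so $P<c\ess/m_+$. Hence $\Expect{P^2\,\ind{K=2}}\le \frac{c\ess}{m_+}\Expect{P\,\ind{K=2}}$. By exchangeability on $\{K=2\}$, $\Expect{P\,\ind{K=2}}=\Expect{W^1\,\ind{K=2}}$. The event $\{K=2\}=\{\sum_{j\le m_+}W^j<c\ess\}$ is decreasing in the i.i.d.\ coordinates, and $W^1$ is increasing. The Harris/FKG inequality for product measures therefore gives $\Expect{W^1\ind{K=2}}\le p\,\Prob{K=2}$. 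Dividing by $p^2$ yields exactly $\frac{c\ess}{m_+p}\Prob{\sum_{j=1}^{m_+}S^j<\ess}$.

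\textbf{Cases $K\in\{0,1\}$.} On these events the output coincides with that of the same algorithm with $m_+=\infty$. It therefore suffices to bound $\Expect{P_\infty^2}/p^2$, where $P_\infty=\Sigma_N/N$. Here $N$ is the number of terms retained ($N=M$ if $K=0$, $N=M-1$ if $K=1$) and $\Sigma_N=\sum_{j\le N}W^j$.
\begin{enumerate}
\item When $K=1$ we have $\Sigma_N<c\ess$; when $K=0$ (for $m_-\ge 1$) we have $\Sigma_N<c\ess+m_-w_*$, which I would handle separately or absorb. In either case $P_\infty^2\le (c\ess)^2/N^2\le (c\ess)^2/\{N(N-1)\}$.
\item The same exchangeability argument as in Proposition \ref{prop.GenOneUnbiased}, applied to ordered pairs of retained indices, gives
\[
\Expect{\frac{1}{N(N-1)}\sum_{i\ne j\le N}W^iW^j}=p^2 .
\]
\item Using $\Sigma_N\ge c\ess-w_*$ on $K=1$, together with $\sum_{i\le N}(W^i)^2\le w_*\Sigma_N$, the numerator satisfies $\sum_{i\ne j}W^iW^j=\Sigma_N^2-\sum_i(W^i)^2\ge c\ess\,(c\ess-2w_*)$ up to lower-order terms. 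Combining gives $\Expect{P_\infty^2}\le p^2\, c\ess/(c\ess-2w_*)$.
\end{enumerate}

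\textbf{Expected obstacle.} Step 3 above gives $1+2w_*/(c\ess-2w_*)$ rather than the stated $1+w_*/(c\ess-2w_*)$. Obtaining the sharp constant requires a more careful pairing of $1/N^2$ with $1/\{N(N-1)\}$. I would try $\Sigma_N^2/N^2=\sum_{i\ne j}W^iW^j/N^2+\sum_i(W^i)^2/N^2$, bound the diagonal part by $w_*P_\infty/N$, and control $\Expect{P_\infty/N}$ through the first-order identity $\Expect{\Sigma_N/N}=p$. Verifying that the exchangeability identity in step 2 genuinely holds under the $K=1$ stopping rule, with the last particle discarded, is the other delicate point.
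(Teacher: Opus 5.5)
\textbf{Verdict: there is a genuine gap.} Your $K=2$ bound is correct and matches the paper's; Harris/FKG is a valid way to justify $\Expect{W^1\ind{K=2}}\le p\,\Prob{K=2}$. The pair-exchangeability identity in your step 2 also holds: given $M=m$ and $K=1$, the law of $W^{1:m}$ has density proportional to $\prod_j f(w^j)\,\ind{\sum_{j<m}w^j<c\ess\le\sum_{j\le m}w^j}$, which is symmetric in $w^{1:m-1}$.

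The gap is in the cases $K\in\{0,1\}$, and it is more than a lost constant.
\begin{itemize}
\item Step 1 relies on an \emph{upper} bound on $\Sigma_N$, and this fails for $K=0$. There $\Sigma_{m_-}$ can be as large as $m_-w_*$, so it cannot be ``absorbed'' without the bound degrading with $m_-$.
\item For $K=1$, pairing $\Sigma_N^2\le(c\ess)^2$ with a separate lower bound on the off-diagonal sum loses slack on both sides. What actually follows from $\Sigma_N\ge c\ess-w_*$ and $\sum_i(W^i)^2\le w_*\Sigma_N$ is $\sum_{i\ne j}W^iW^j\ge(c\ess-w_*)(c\ess-2w_*)$, not $c\ess(c\ess-2w_*)$. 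So even the weaker constant $1+2w_*/(c\ess-2w_*)$ is not established as written.
\item Your suggested repair does not work either. With only $\Expect{\Sigma_N/N}=p$ and the deterministic bound $N\ge(c\ess-w_*)/w_*$, the diagonal term is controlled by $w_*^2p/(c\ess-w_*)$. That is a \emph{relative} error of order $w_*^2/(c\ess\,p)$, which is unbounded as $p\downarrow0$ --- exactly the regime the proposition is meant for.
\end{itemize}

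\textbf{The missing idea} is to use only the \emph{lower} bound on the retained sum, and to compare $P^2$ with the U-statistic multiplicatively and pointwise. Set $U_N:=\frac{1}{N(N-1)}\sum_{i\ne j\le N}W^iW^j$ and $Q_N:=\sum_{i\le N}(W^i)^2\le w_*\Sigma_N$. Then
\[
P^2=\frac{N-1}{N}\cdot\frac{\Sigma_N^2}{\Sigma_N^2-Q_N}\,U_N\le\frac{\Sigma_N}{\Sigma_N-w_*}\,U_N\le\frac{c\ess-w_*}{c\ess-2w_*}\,U_N\quad\mbox{on }\{K\ne 2\}.
\]
The last step holds because $x\mapsto x/(x-w_*)$ is decreasing, with $\Sigma_N\ge c\ess-w_*$ on $\{K=1\}$ and $\Sigma_N\ge c\ess$ on $\{K=0\}$; this also forces $N\ge 2$. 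Your step 2 then gives $\Expect{U_N\ind{K\ne2}}=\Expect{W^1W^2\ind{K\ne2}}\le\Expect{W^1}\Expect{W^2}=p^2$. This yields exactly $1+w_*/(c\ess-2w_*)$, and there is no need to pass to $m_+=\infty$.

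Equivalently, within your own diagonal/off-diagonal split, use $1/N=P/\Sigma_N\le P/(c\ess-w_*)$. This turns the diagonal term into $\frac{w_*}{c\ess-w_*}P^2$, which can be moved to the left-hand side.

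The paper's proof is this same argument in leave-one-out form. It writes $\Expect{P^2\ind{K=1}}=\Expect{W^1P\ind{K=1}}$. It then uses $\sum_{j=2}^{M-1}W^j\ge c\ess-2w_*$ to get $\sum_{j=1}^{M-1}W^j\le\bigl(1+\frac{w_*}{c\ess-2w_*}\bigr)\sum_{j=2}^{M-1}W^j$. Exchangeability applied once more reaches $\Expect{W^1W^2\ind{K=1}}$, and $K=0$ is handled by an analogous, sharper step.
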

The term $1+w_*/(c\ess-2 w_*)$ is a slight loosening of the upper bound in Proposition \ref{prop.bound}, while $\Prob{\sum_{j=1}^{m_+}S^j<\ess}$ is the probability that the bound $m_+$ will make a difference to the output. 
The result gives an upper bound on the relative variance, generalising the approximation in \eqref{eqn.VrelExact}:
\begin{equation}
  \label{eqn.VrelGenOne}
\Vrel
\le
\exp\left(\frac{Tw_*}{c\ess-2w_*}+\sum_{t=1}^T\frac{c\ess}{m_+ p_t}\Prob{\sum_{j=1}^{m_+} S^j_t<\ess}\right)-1.
\end{equation}
The additional variance comes from the second term in the sum, which we, therefore, wish to keep small but not negligible, since, if it were negligible, the Frankenfilter would perform nearly as much work as the alive filter. 
Bernstein's inequality gives:

\begin{proposition}
\label{prop.BoundProbMissS}
If $m_+ p =\kappa c\ess$ for some $\kappa \ge 7/4$ then
\[
\Prob{\sum_{j=1}^{m_+}S^j< \ess}<
\exp\left\{-\frac{3\left(\kappa-\frac{7}{4}\right) c\ess}{8w_*}\right\}.
\]
\end{proposition}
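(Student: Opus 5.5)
The plan is to apply the one-sided Bernstein inequality to the i.i.d. bounded variables $S^j=W^j/c$, $j=1,\dots,m_+$, and then simplify the resulting exponent to the stated form. The relevant facts come from the setting of Proposition \ref{prop.GenOneStepBound}: $0\le S^j\le w_*/c$ and $\Expect{S^j}=p/c$, so $\Expect{\sum_{j=1}^{m_+}S^j}=m_+p/c=\kappa\ess$. The event $\sum_{j=1}^{m_+}S^j<\ess$ is therefore a lower-tail deviation of size at least $u:=(\kappa-1)\ess$ below the mean.

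\textbf{Step 1: Bernstein ingredients.} Set $Y_j:=p/c-S^j$. These are i.i.d. with mean zero and satisfy $Y_j\le p/c\le w_*/c=:b$, where $p\le w_*$ because $W\le w_*$. For the variance I will use the crude bound
\[
\Var{S^j}\le \Expect{(S^j)^2}\le \frac{w_*}{c}\Expect{S^j}=\frac{w_*p}{c^2},
\]
which gives the total variance bound $m_+\Var{S^j}\le m_+pw_*/c^2=\kappa\ess w_*/c$.

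\textbf{Step 2: apply Bernstein.} Bernstein's inequality gives
\[
\Prob{\sum_{j=1}^{m_+}Y_j\ge u}\le \exp\left\{-\frac{u^2}{2\left(\kappa\ess w_*/c+bu/3\right)}\right\}.
\]
Substituting $u=(\kappa-1)\ess$ and $b=w_*/c$, the exponent simplifies to
\[
-\frac{3c\ess(\kappa-1)^2}{2w_*(4\kappa-1)}.
\]

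\textbf{Step 3: compare with the target exponent.} It remains to show
\[
\frac{3(\kappa-1)^2}{2(4\kappa-1)}>\frac{3(\kappa-\frac74)}{8},
\]
which is equivalent to $4(\kappa-1)^2>(\kappa-\tfrac74)(4\kappa-1)$. Expanding, the left side is $4\kappa^2-8\kappa+4$ and the right side is $4\kappa^2-8\kappa+\tfrac74$. The left side is strictly larger, so the inequality holds and the stated bound is strict.

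The condition $\kappa\ge 7/4$ is not needed for validity, since the argument works for any $\kappa>1$. Its role is that it makes the stated exponent nonpositive, so the bound is meaningful.

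The only point needing care is the choice of variance bound in Step 1. Using $\Var{S^j}\le (w_*/c)(p/c)$, rather than $(w_*/c)^2$, is essential. It is what makes the exponent scale as $c\ess/w_*$, independent of $p$, and this is where I expect the main obstacle to lie.
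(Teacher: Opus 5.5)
Your proposal is correct and follows the paper's proof essentially step for step. Both arguments use Bernstein's inequality with the variance bound $\Var{W}\le w_*p$. Substituting $m_+p=\kappa c\ess$ gives the exponent $\frac{3(\kappa-1)^2c\ess}{2w_*(4\kappa-1)}$ in each case. The final quadratic comparison is also the same: the paper's gap $9/16$ is your $9/4$ divided by four. The only difference is cosmetic: you work with $S^j=W^j/c$ and the one-sided bound $Y_j\le w_*/c$, whereas the paper works with $W^j$, threshold $c\ess$ and $|W^j-p|\le w_*$.
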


Combining with Proposition \ref{prop.GenOneStepBound}, we obtain
\[
\frac{1}{p^2}\Expect{P^2}
<
1+\frac{w_*}{c\ess-2w_*}+
\frac{1}{\kappa}\exp\left\{-\frac{3(\kappa-\frac{7}{4}) c\ess}{8w_*}\right\}.
\]
In the typical regime where $c\ess/w_*>>1$, the bound on the probability that $m_+$ actually plays a role in the form of $P$, is controlled with relatively small values of $\kappa$, so this is sufficient to control the whole final term.

In practice, we do not know each $p_t$; however, during a trial run of pseudo-marginal Metropolis--Hastings (PMMH), it is straightforward to output the fraction of times the Frankenfilter finishes sampling without achieving total success $\ess$ (see Table \ref{table.LVvaryMpl}, for example); $m_+$ can be set so that this is small but not negligible. Alternatively, a  very approximate lower bound on each $p_t$ could be used to set a reasonable $m_+$ directly; in other parts of Section  \ref{sec.sims}  we use a preliminary run of the alive particle filter at the true parameter value. 

\section{Simulations}
\label{sec.sims}

In this section we empirically validate the tuning advice of Section \ref{sec.tune} and evaluate the performance of the Frankenfilter inside PMMH by considering five applications based upon MJPs, their discretisation and their approximation through a stochastic differential equation (SDE).

We allow the index set, $\mathcal{T}$, of $X_t$ to be discrete or continuous. To fix additional notation to that in Section \ref{sec.intro}, for MJPs, $\mathcal{X}\subseteq \mathbb{Z}^{d_x}$ and (typically) $\mathcal{Y}\subseteq\mathbb{Z}^{d_y}$; for SDEs, $\mathcal{X}\subseteq \mathbb{R}^{d_x}$ and (typically) $\mathcal{Y}\subseteq \mathbb{R}^{d_y}$. In our MJP examples, $y_{i}=F^\top x_{t_i}$ for some constant $d_x \times d_y$ matrix $F$ and in our SDE example, $Y_{i}\sim \mathsf{Gamma}(\alpha,\alpha/x_{t_i})$ for some $\alpha>0$. We consider three specific Markov process models as follows. In the applications in Sections \ref{sec:death} and \ref{sec:mule}, the process $(X_t)_{t\geq 0}$ is modelled as an MJP. Given the state of the system at time $t$ as $x_t$, the process transitions to time $t+\tau$ via
\begin{equation}\label{eqn.MJP}
X_{t+\tau}=x_t+\sum_{i=1}^r S^{i}\mathcal{P}_i\left(\int_{t}^{t+\tau}h_i(x_{t'},\theta)dt' \right),
\end{equation}
where $\mathcal{P}_i(t)$, $i=1,\ldots,r$ are independent, unit rate Poisson processes, $S^i$ is the $i$th row of a $d_x \times r$ scaling matrix $S$ whose elements are non-negative integers and the hazard function $h(x_{t}):=(h_{1}(x_t,\theta),\ldots,h_r(x_t,\theta))^\top$ is a vector of event rates or hazards parameterised by $\theta$. MJPs are routinely used to model reaction networks consisting of $r$ reaction channels; in this case, $X_t$ represents a vector of specie counts, $\theta$ is a vector of reaction rate constants and $S$ is known as the stoichiometry matrix, which describes the effect of each reaction on each component of $X_t$ \citep[see e.g.][]{Wilkinson06}. Exact simulation of the MJP is straightforward via Gillespie's direct method \citep{Gillespie77}. For the applications in Sections~\ref{sec:dim} and \ref{sec:lv} where simulating the MJP exactly is likely to be computationally preclusive, we consider a tau-leap approximation \citep[][]{Gillespie01}. In this case, the state $x_t$ is assumed constant over the time interval $[t,t+\tau)$ so that (\ref{eqn.MJP}) becomes       
\begin{equation}\label{eqn.tauleap}
X_{t+\tau}=x_t+\sum_{i=1}^r S^{i}\mathcal{P}_i\left(h_i(x_{t},\theta)\tau \right).
\end{equation}
Tau-leaping permits an adaptive time-step $\tau$ \citep[see e.g.][]{Cao07}. In what follows, for simplicity, we assume a constant $\tau$. In Section \ref{sec.LVSDE}, $(X_t)_{t\ge 0}$ is the solution of a stochastic differential equation (SDE),
\[
\md X_t = a(X_t)\md t + b(X_t)\md W_t,~~~X_0=x_0,
\]
where $a:\mathbb{R}^{d_x}\to \mathbb{R}^{d_x}$ is the drift, $b:\mathbb{R}^{d_x}\to \mathbb{R}^{d_x\times d_x}$ is the volatility and $W_t$ is a $d_x$-dimensional vector of independent Brownian motions.

For PMMH runs, all filters and simulators were coded in \texttt{C++} and implemented in R via the package \texttt{Rcpp}. All runs were performed on a desktop computer with an Intel Core i7-10700 processor and a 2.90GHz clock speed. For PMMH, a Gaussian random walk proposal was applied to the natural logarithm of the parameter (vector), with innovation variance $\gamma \Var{\log\theta|y_{1:n}}$, and the scaling $\gamma$ chosen to meet a desired acceptance rate \citep[see Table~1 of][]{schmon2021}. Code for the SDE experiments was written in R and run on an Intel Core Ultra 5 235U (up to 4.9 GHz). Code to implement the Frankenfilter and competing methods for each application is available from https://github.com/AndyGolightly/Frankenfilter .

Throughout, we abbreviate the bootstrap particle filter to BSPF, the Frankenfilter to FF and the alive particle filter to APF. When tabulating results, for the BSPF, $m_+$ represents the actual number of particles used and for the APF it represents the hard threshold, after which the algorithm is terminated and returns an estimated likelihood of 0. Effective sample size is abbreviated to ESS and effective samples per second to ESS/s.

\subsection{Pure death process}\label{sec:death}
We consider an MJP representation of a pure death process with hazard $h(x_t,\theta)=\theta x_t$ corresponding to a single reaction channel of the form $\mathcal{X} \longrightarrow \emptyset$, with stoichiometry matrix $S=-1$. For the model, the transition probability of $X_t$ (given the initial state $x_0$) is tractable and given by 
\[
\Prob{X_t=x}=\mathsf{Bin}\left(x;x_0,e^{-\theta t}\right).
\]

We simulated two synthetic data sets on the time interval $[0,50]$ with $x_0=100$ and $\theta=\theta_{\text{true}}=0.01$. The first data set, denoted D50, consists of observations at integer times. The second data set, denoted D50mod, is as D50, albeit with the last two observations replaced by the lower $0.01\%$ quantiles of $X_{49}|X_{48}$ and $X_{50}|X_{49}$ respectively so $(x_{48},x_{49},x_{50})=(58, 57, 57)$ rather than $(58, 53, 48)$. We assumed that $\theta \sim \mathsf{Gamma}(10,1000)$ \emph{a priori} and ran PMMH for $50K$ iterations with: the BSPF, FF and APF using the settings in Table~\ref{tab:tabD}. The number of particles for the BSPF ($m_{+}$) was chosen by following the practical advice of \cite{SheThiRobRos2015}. For the FF, we set $m_{-}=0$ and chose $m_{+}$ to be the number of particles used in the BSPF. To assess the posterior accuracy of each filter-driven PMMH scheme, we also ran a Metropolis-Hastings (MH) scheme using the \emph{exact} likelihood, evaluated using the tractable transition probabilities.      

\begin{table}[ht]
\centering
\small
	\begin{tabular}{@{}ll rrr rrrr@{}}
         \toprule
Data set & Filter  & $\ess$ &  $m_{+}$ & CPU (s) & ESS & ESS/s & Mean (SD) & SE  \\
\midrule
D50 & Direct  & -&  -&   12& 8792& 732.7&1.088 (0.149) & 0.0016 \\
    &BSPF & -  &400   & 527& 2552&  4.8& 1.084 (0.148) & 0.0029\\
&APF      & 50& 400   & 226& 1180&  5.2& 0.872 (0.200) & 0.0058\\
&         & 50& $10^4$& 423& 3263&  7.7& 1.107 (0.138) & 0.0024\\
&FF       & 50& 400   & 221& 2227& 10.1& 1.085 (0.147) & 0.0031\\
\midrule
D50mod &Direct   & -& -& 12& 9542& 795.2&       1.276 (0.163) & 0.0017\\
       &BSPF     & -& $10^4$& 11824&3415& 0.3 & 1.280 (0.163) & 0.0028\\
&APF      & 50& $10^4$&  745&   58      & 0.1 & 2.142 (0.081) & 0.0107 \\
&         & 50& $10^6$& 5899& 3050      & 0.5 & 1.263 (0.160) & 0.0029\\       
&FF       & 50& $10^4$&  799& 2458      & 3.1 & 1.279 (0.160) & 0.0032\\
\bottomrule
\end{tabular}
	\caption{Death model. Filter used inside PMMH, required total success $\ess$, maximum number of simulations $m_{+}$, CPU time (in seconds), ESS, ESS/s, estimated posterior mean of $\theta/\theta_{\text{true}}$ (and standard deviation in parentheses) and standard error (SE) of this posterior mean. All results are based on $50K$ iterations of PMMH.}\label{tab:tabD}
\end{table}

\begin{figure}[ht]
\centering
\includegraphics[width=16.0cm,height=6.0cm]{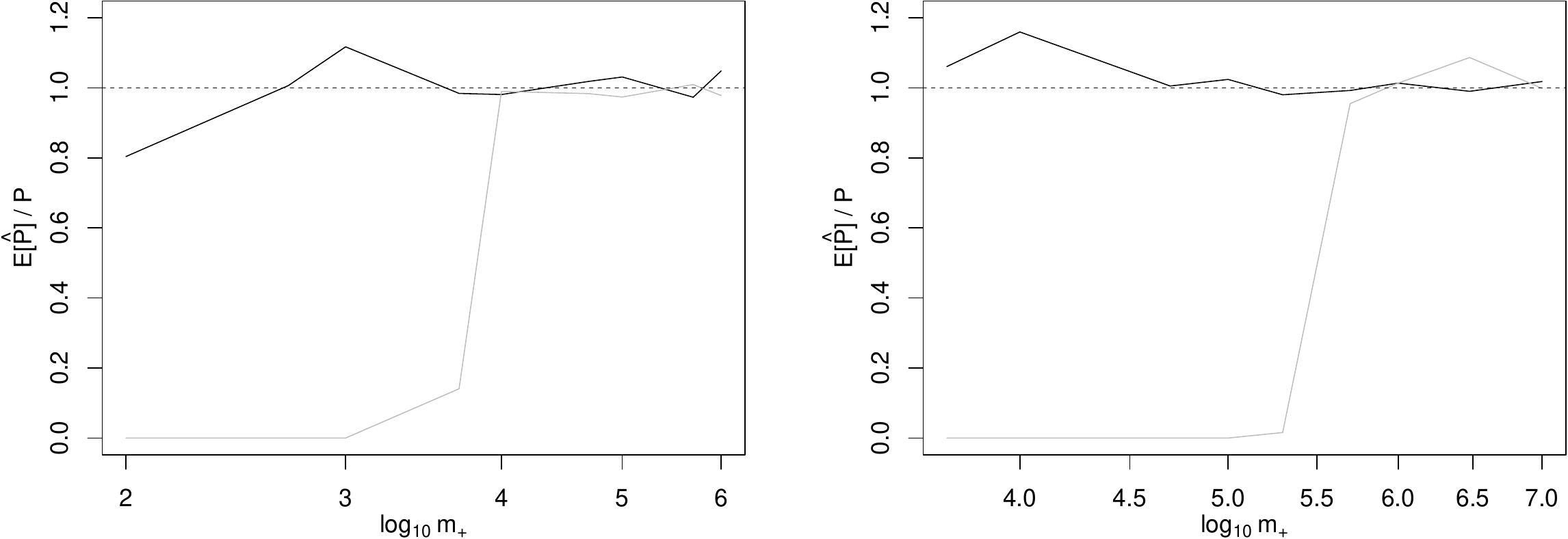}
\caption{Death model. Ratio of the expectation of the estimator of the likelihood and exact likelihood at $\theta=0.01$, based on the Frankenfilter (black lines) and alive particle filter (grey lines) using D50 (left panel) and D50mod (right panel).}
\label{fig:figD}
\end{figure}

It is evident from Table~\ref{tab:tabD} that for D50, there is little difference in overall efficiency between PMMH schemes, as measured by effective sample size per second (ESS/s). FF outperforms BSPF and APF by factors of 2 and 1.3-2, respectively. However, failure to choose a sufficiently large number of simulations $m_+$ in APF results in biased inferences for $\theta$, due to the use of realisations of a biased estimator of likelihood in PMMH (see Figure~\ref{fig:figD} demonstrating this bias at the ground truth $\theta$). A $3\widehat{\sigma}/\sqrt{\mathrm{ESS}}$ interval for the posterior expectation of $\theta/\theta_{\text{true}}$ using the direct method is $(1.083, 1.093)$ which does not overlap with the same interval under APF, even for $m^+=10^4$. Both BSPF and FF on the other hand, give inferences consistent with the direct method. For D50mod, the outlying final two observations have transition probabilities (likelihoods), at the ground truth $\theta$, of $2.3 \times 10^{-4}$ and $1.5\times 10^{-4}$, necessitating a much larger value of $m_{+}$ for BSPF, and, in order to obtain accurate inferences, for APF. Consequently, FF outperforms BSPF and APF by factors of 10 and 6-30 respectively. The particularly small ESS obtained for APF with $m_{+}=10^4$ was due to the failure of the algorithm to generate a non-zero estimate of the likelihood for most iterations. Whereas $3\widehat{\sigma}/\sqrt{ESS}$ intervals for the posterior expectation using BSPF and FF contain the estimate from the direct method, the interval from APF with $m_+=10^6$ only just intersects the interval from the direct method, and the interval using the APF with $m_+=10^4$ does not intersect it at all.

\subsection{Protein dimerisation}\label{sec:dim}
A simple model of protein dimerisation involves two reactions of the form $2\mathcal{P} \longrightarrow \mathcal{P}_2$ and $\mathcal{P}_2 \longrightarrow 2\mathcal{P}$, where $\mathcal{P}$ represents protein and $\mathcal{P}_2$ represents dimers of the protein. The stoichiometry matrix and hazard vector are given by
\[
S=\begin{pmatrix} -2 & 1\\ 2 & -1 \end{pmatrix}, \quad h(x_t,\theta)=(\theta_1 x_{1,t}(x_{1,t}-1)/2, \theta_2 x_{2,t})^\top
\]
where $x_{1,t}$ and $x_{2,t}$ denote the respective numbers of proteins and dimers, at time $t$. Using the MJP representation (\ref{eqn.MJP}), we simulated several data sets using the settings in Table~\ref{tab:tabPD}; since $x_1+2x_2$ is conserved, species numbers in the P*0b data sets are typically ten times those in the P*0a data sets, and so transition probabilities are typically reduced by a factor of $10$. In what follows, inference is performed on the tau-leaping model in (\ref{eqn.tauleap}) with $\tau=0.1$. 


We used the Frankenfilter inside PMMH, with $m_{-}=0$. Following the advice in Section \ref{sec.choose.mplus}, we set $m_{+}=10\ \ess/\min_j \hat{P}_j$, where $\hat{P}_j$ is the estimated transition probability at time $j$, obtained by running the alive particle filter with $\theta$ fixed at the ground truth and $\ess=5T$. Hence, $m_{+}$ is chosen conservatively, so that the tuning advice for $\ess$ in Section~\ref{sec.TuneExact} should be appropriate. We specified $\theta_1\sim \mathsf{Gamma}(2,500)$ and $\theta_2 \sim \mathsf{Gamma}(2,2)$ \emph{a priori} and ran PMMH for $50K$ iterations for each data set. We empirically validate the tuning advice of Section \ref{sec.TuneExact} by reporting overall efficiency (ESS/s) against various choices of the required total success $\ess$ in Figure~\ref{fig:figPD}. Since there are two parameter chains, we take ESS as multivariate ESS \citep[see e.g.][]{vats19}. Setting $\ess=T$ is clearly a viable strategy, with this choice giving optimal or close to optimal (at least $85 \%$ of maximum efficiency) across all data sets. Adopting a slightly more conservative strategy by requiring $V_{rel}=1$ suggests $\ess=16, 45$ and $74$ for data sets with $T=10, 30$ and $50$ observations. For data sets P10a, P30a and P50a, this would result in efficiency factors (efficiencies relative to the maximum obtained across the different $\ess$ values) of $77\%$, $79\%$ and $71\%$, respectively. For data sets P10b, P30b and P50b, the relative efficiency factors are $91\%$, $87\%$ and $89\%$, respectively. Data sets P10a, P30a and P50a have relatively large typical transition probabilities at the true $\theta$. With the approximation of $\sum_{t=1}^T 1-p_t\approx T$ made in deriving the tuning advice, we might expect the optimal choice of $\ess$ to be slightly lower than $T$; however, this is only clearly apparent for the P50a data set, where the posterior for $\theta$ is tighter about the truth.         

\begin{table}[ht]
\centering
\small
	\begin{tabular}{@{}l rrr rrr@{}}
         \toprule
Data set & $\theta$  & $x_0$ &  $\Delta t$ & $T_{max}$ & $T$ & $\frac{1}{T}\sum_{t=1}^{T}p_t$  \\
\midrule
P10a & (0.00332, 0.2) & (20, 1) & 1 & 10 & 10 & 0.34 \\
P10b & (0.00332, 0.2) & (200, 10) & 1 & 10 & 10 & 0.07 \\
\midrule
P30a & (0.00332, 0.2) & (20, 1) & 1 & 30 & 30 & 0.35 \\
P30b & (0.00332, 0.2) & (200, 10) & 1 & 30 & 30 & 0.08 \\
\midrule
P50a& (0.00332, 0.2) & (20, 1) & 1 & 50 & 50 & 0.34 \\
P50b& (0.00332, 0.2) & (200, 10) & 1 & 50 & 50 & 0.07 \\
\bottomrule
\end{tabular}
	\caption{Protein dimerisation model. Synthetic data set name, ground truth parameter values $\theta$, initial condition $x_0$, inter-observation time $\Delta t$, final time $T_{max}$, number of observations $T$ and average transition probability $p_t:=\Prob{X_t=x_t|X_{t-1}=x_{t-1}}$ computed at the ground truth $\theta$.} \label{tab:tabPD}
\end{table}

\begin{figure}[ht]
\centering
\includegraphics[width=16.0cm,height=5.0cm]{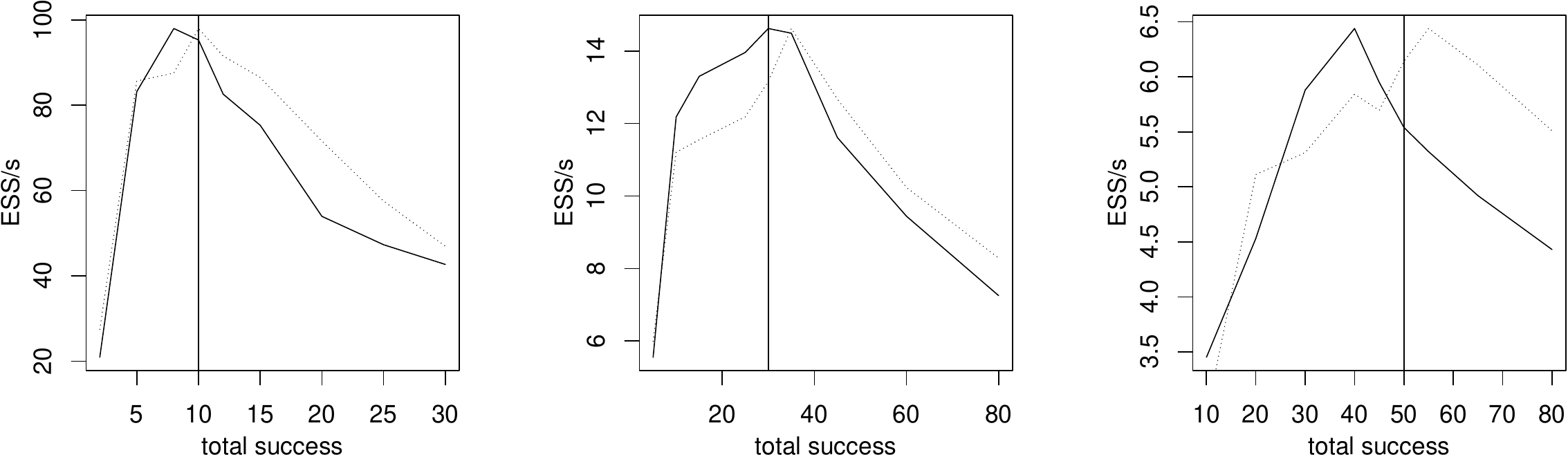}
\caption{Protein dimerisation model. ESS/s versus required total success $\ess$, from the output of $50K$ iterations of PMMH, using data sets with 10 observations (left panel), 30 observations (middle panel), 50 observations (right panel). Solid lines correspond to data sets P10a, P30a and P50a whereas dotted lines correspond to data sets P10b, P30b and P50b (with ESS/s scaled to give the same maximum within observation regime). The value $\ess=T$ is indicated by a vertical line.}
\label{fig:figPD}
\end{figure}

\subsection{Lotka--Volterra tau leap}\label{sec:lv}
The Lotka-Volterra model of predator-prey interaction is routinely used to validate and compare inference schemes \citep[see e.g.][]{BWK08}. The associated reaction network for prey ($\mathcal{X}_1$) and predators ($\mathcal{X}_2$) has three reactions of the form  $\mathcal{X}_1 \longrightarrow 2\mathcal{X}_1$, $\mathcal{X}_1 + \mathcal{X}_2 \longrightarrow 2\mathcal{X}_2$ and $\mathcal{X}_2 \longrightarrow \emptyset$, representing prey reproduction, prey death / predator reproduction, and predator death. The stoichiometry matrix and hazard vector are given by
\[
S=\begin{pmatrix} 1 & -1 & 0\\ 0 & 1 & -1 \end{pmatrix}, \quad 
h(x_t,\theta)=(\theta_1 x_{1,t}, \theta_2 x_{1,t}x_{2,t},\theta_3 x_{2,t})^\top
\]
where $x_{1,t}$ and $x_{2,t}$ denote the respective numbers of prey and predators, at time $t$. Using the MJP representation (\ref{eqn.MJP}), we simulated four data sets using the settings in Table~\ref{tab:tabLVA}. In what follows, we take the tau-leaping model in (\ref{eqn.tauleap}) with $\tau=0.1$ as the inferential model. 

\begin{table}[ht]
\centering
\small
	\begin{tabular}{@{}ll rrr rrr@{}}
         \toprule
Data set & Obs& $\theta$  & $x_0$ &  $\Delta t$ & $T_{max}$ & $T$ & $\frac{1}{T}\sum_{t=1}^{T}p_t$  \\
\midrule
LV20     & $x_{1,t}, x_{2,t}$ & (0.5, 0.0025, 0.3) & (50, 50) & 1\phantom{.5} & 20 & 20 & 0.00136 \\
LV20prey & $x_{1,t}$          & (0.5, 0.0025, 0.3) & (50, 50) & 1\phantom{.5} & 20 & 20 & 0.02836 \\
\midrule
LV40     & $x_{1,t}, x_{2,t}$ & (0.5, 0.0025, 0.3) & (50, 50) & 0.5 & 20 & 40 & 0.00243 \\
LV40prey & $x_{1,t}$          & (0.5, 0.0025, 0.3) & (50, 50) & 0.5 & 20 & 40 & 0.04339 \\
\bottomrule
\end{tabular}
	\caption{Lotka-Volterra tau leap. Synthetic data set name, observed specie(s), ground truth parameter values $\theta$, initial condition $x_0$, inter-observation time $\Delta t$, final time $T_{max}$, number of observations $T$ and average transition probability, $p_t:=\Prob{X_t=x_t|X_{t-\Delta t}=x_{t-\Delta t}}$ (for full observations) or $p_t:=\Prob{Y_t=y_t|Y_{1:t-\Delta t}=y_{1:t-\Delta t}}$ (for partial observations) computed at the ground truth $\theta$.} \label{tab:tabLVA}
\end{table}

For each data set typical transition probabilities are $\mathcal{O}(10^{-3})$ at the ground truth parameter values (and likely to be considerably smaller as $\theta$ deviates from the ground truth). Consequently, implementing the vanilla Frankenfilter (with particle trajectories generated via forward simulation) is computationally prohibitive. Therefore, for each inter-observation interval $(t-\Delta t,t)$, we generate trajectories $x_{(t-\Delta t,t)}^j$ using the conditioned hazard of \cite{GoliWilk15}. For $s\in [t-\Delta t,t)$, the conditioned hazard takes the form
\[
h(x_s,\theta|y_t)= h(x_s,\theta)+H(x_s,\theta)S^\top F\{F^\top SH(x_s,\theta)S^\top F(t-s)\}^{-1}F^\top \{x_t-x_s-Sh(x_s,\theta)(t-s) \}
\]
where $y_t=F^\top x_t$, $H(x_s,\theta)=\text{diag}\{h(x_s,\theta)\}$ and $F=I_{2}$ in the case of full observations, and $F=(1,0)^\top$ in the case of partial observations (of prey values only). Hence, the weight function takes the form
\begin{equation}\label{eqn.weight}
w(x_{(t-\Delta t,t)}^j) = I(F'x_t=F'x_t^j)\prod_{k=1}^{N} \prod_{i=1}^3 
\frac{\mathsf{Po}\left(\Delta n_{i,t-\Delta t+k\tau};\ h(x_{t-\Delta t+(k-1)\tau},\theta)\ \tau\right)}
{\mathsf{Po}\left(\Delta n_{i,t-\Delta t+k\tau};\ h(x_{t-\Delta t+(k-1)\tau},\theta|y_t)\ \tau\right)}
\end{equation}
where $\mathsf{Po}(x; \lambda)$ denotes the Poisson probability mass function with rate $\lambda$, $\Delta n_{i,s}$ denotes the number of reactions of type $i$ in the interval $(s-\tau,s)$ and $N=\Delta t/\tau$.   

In the following numerical experiments, we took an independent prior specification with $\theta_i \sim \mathsf{Gamma}(1,1)$, $i=1,2,3$ and set $m_{-}$ and $m_{+}$ as in Section~\ref{sec:dim}. To validate the tuning advice in Section \ref{sec.tune}, we ran PMMH for $50K$ iterations for each data set for various choices of the required total success $\ess$; Figure~\ref{fig:figLV} reports overall efficiency (ESS/s) against $\ess$. For data sets with full observations, setting $\ess=T$ gives optimal overall efficiency for LV40 and close to optimal overall efficiency for LV20 (of those values of $\ess$ for which overall efficiency is reported). Requiring $V_{rel}=1$ gives $\ess$ values of 31 and 60 for LV20 and LV40 respectively; these correspond to overall efficiencies of approximately $95\%$ of maximum efficiency. For data sets LV20prey and LV40prey, which have observations on prey only, we additionally computed the value of $\ess$ that satisfies \eqref{eqn.Vrel.gen} for $V_{rel}=1$. This was achieved with replicate runs of the Frankenfilter, with $p_t$ approximated via the transition density under the chemical Langevin equation \cite[CLE, see e.g.][]{Wilkinson06}, and the expectations in (\ref{eqn.Vrel.gen}) replaced by their sample equivalents. That is,
\[
p_t \approx \mathsf{N}\left(F^\top x_t; F^\top (x_{t-\Delta t}+\alpha(x_{t-\Delta t},\theta)\Delta t)\,,\, F^\top\beta(x_{t-\Delta t},\theta)F\Delta t \right)
\]
where $\mathsf{N}(\cdot;m,V)$ denotes the Gaussian density with mean vector $m$ and variance matrix $V$, and
\begin{equation}
  \label{eqn.SDEdriftVolFromMJP}
\alpha(x,\theta) = S h(x,\theta), \qquad \beta(x,\theta)=S\text{diag}\{h(x,\theta)\} S^\top.
\end{equation}
Figure~\ref{fig:figLV} (bottom panel) shows the resulting value of $\ess$ (as a vertical line). This value is broadly consistent with the largest overall efficiencies obtained by running the Frankenfilter inside PMMH for various choices of $\ess$, achieving $97\%$ of maximum efficiency for LV20prey and $98\%$ for LV40prey.      

\begin{figure}[ht]
\centering
\includegraphics[width=12.0cm,height=5.0cm]{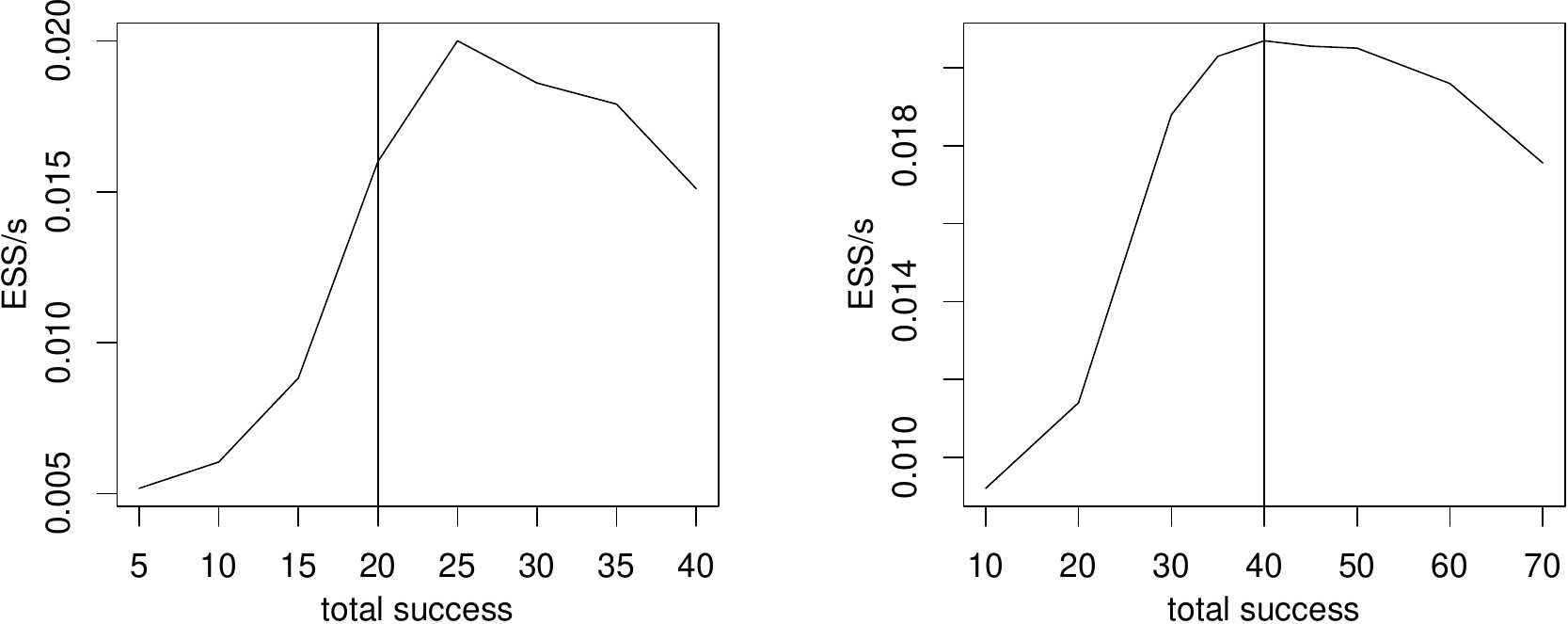}
\\
\includegraphics[width=12.0cm,height=5.0cm]{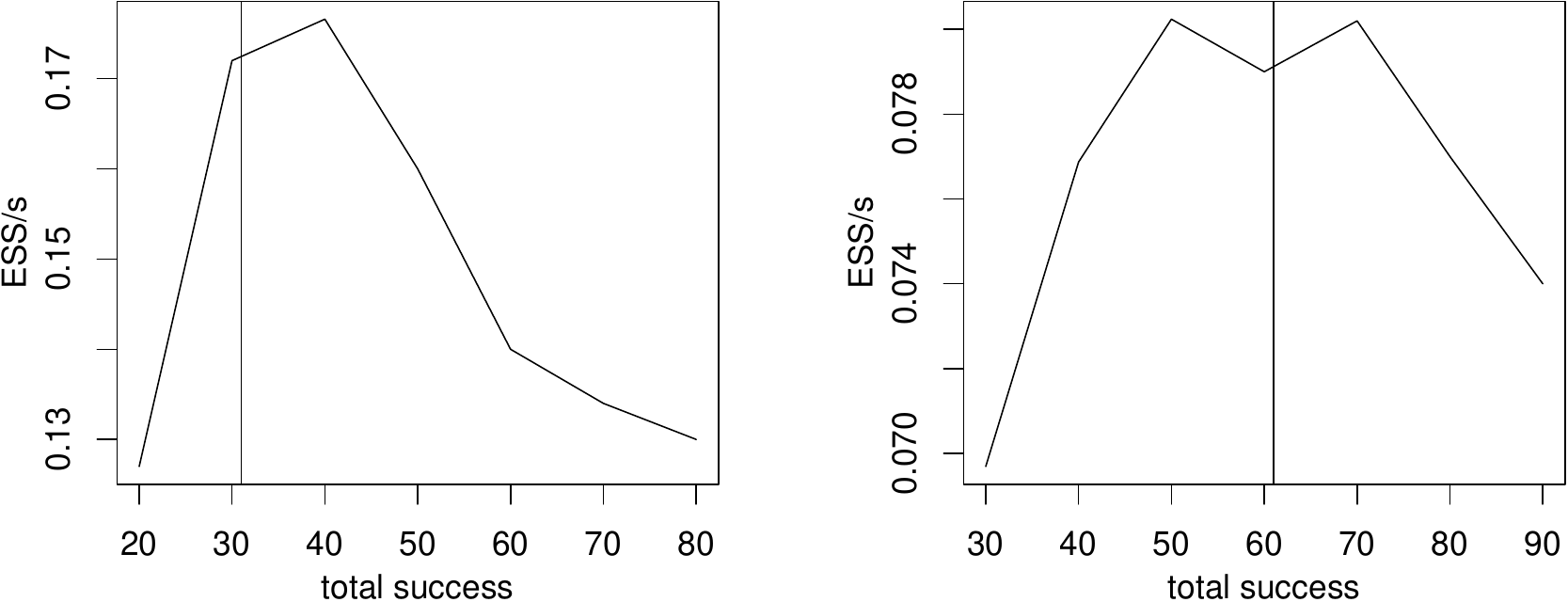}
\caption{Lotka-Volterra tau leap. ESS/s versus required total success $\ess$, from the output of $50K$ iterations of PMMH, using data sets LV20 (top left), LV40 (top right), LV20prey (bottom left) and LV40prey (bottom right). Vertical lines show $\ess=T$ for the full observation data sets LV20 and LV40, and show the  value of $\ess$ obtained by solving \eqref{eqn.Vrel.gen} with $V_{rel}=1$ for the partial observation data sets LV20prey and LV40prey.}
\label{fig:figLV}
\end{figure}

Finally, in Table~\ref{tab:tabLVB}, we compare the overall efficiency of PMMH when using the bootstrap particle filter (BSPF) and Frankenfilter (FF). All filters used the weight function in (\ref{eqn.weight}), the optimal value of $\ess$ (FF) and $m_{+}$ (BSPF), and the PMMH scheme was initialised at an estimate of the posterior mean of $\log\theta$. For the data sets with full observations, FF outperforms BSPF in terms of overall efficiency by a factor of 1.9-2.6. This reduces to 1.7-2.4 for the data sets with partial observations. It is worth noting that these results are predicated on the availability of an initial PMMH run to find key quantities such as the estimated posterior mean and variance, for use in the main monitoring runs. In reality, ``good'' starting values for the PMMH chain may not be readily available. To compare the performance of FF and BSPF over hypothetical initial runs, we ran PMMH using LV20prey, with an initial value given by the ground truth parameter vector multiplied by 5. For FF, we fixed $\ess=2T$ and $m^{+}=3\times 10^5$. For BSPF, we set $m^{+}=3500$, which gave around a $50\%$ chance of generating a non-zero likelihood estimate at the initial value of the chain. Using these settings and $50K$ iterations, overall efficiency scales as $1:4$ for BSPF versus FF. Intuitively, after burn-in, FF is able to use a lower computational effort to meet the desired number of matches.       


\begin{table}[ht]
\centering
\small
	\begin{tabular}{@{}ll rrr rrr@{}}
         \toprule
Data set & Filter  & $\ess$ &  $m_{+}$ & CPU (s) & ESS & ESS/s & Rel \\
\midrule
LV20 & BSPF &   -&  9500          & 244740 & 1893 & 0.0077 & 1.0\\
     & FF   &  25&  $4\times 10^6$& 38163  & 770  & 0.0202 & 2.6\\
\midrule
LV20prey & BSPF                  &   -& 900            & 36532 & 2680 & 0.0733 & 1.0\\
         & FF                    &  40& $3\times 10^5$ & 9840  & 1737 & 0.1766 & 2.4\\
\midrule
LV40 & BSPF &   -&           5400 & 143370 & 1548 & 0.0108 & 1.0\\
     & FF   & 40 & $2\times 10^6$ & 63680  & 1320 & 0.0207 & 1.9\\
\midrule
LV40prey & BSPF                  & -& 850               & 55007 & 2674 & 0.0486 & 1.0\\ 
         & FF                    & 70 & $2\times 10^5$  & 23120 & 1854 & 0.0802 & 1.7\\
\bottomrule
\end{tabular}
	\caption{Lotka-Volterra tau leap. Filter used inside PMMH, required total success $\ess$, maximum number of simulations $m_{+}$, CPU time (in seconds), effective sample size (ESS) and effective sample size per second (ESS/s). All results are based on $50K$ iterations of PMMH.}\label{tab:tabLVB}
\end{table}

\subsection{Lotka--Volterra SDE}
\label{sec.LVSDE}
Here we compare the FF and BSPF for inference on an SDE that is observed with noise. Specifically, we consider the SDE for the Lotka--Volterra model where the drift and volatility are given through \eqref{eqn.SDEdriftVolFromMJP}: $a(X_t)=\alpha(X_t)$ and $b(X_t)=\sqrt{\beta(X_t)}$; for the matrix square root, we used the transpose of the Cholesky decomposition. We employed the same time discretisation as for the tau-leap: $\delta=0.1$.

We created a noisy version of each observation in the LV20 data set : for $i=1,\dots T=20$ and $j=1,2$, $Y_{i,j}|X_{t_i,j}=x\sim \mathsf{Gamma}(\alpha,\alpha/x)$, so that $\Expect{Y_{i,j}|X_{t_i,j}}=X_{t_i,j}$ and $\SD{Y_{i,j}|X_{t_i,j}}=X_{t_i,j}/\sqrt{\alpha}$. We chose $\alpha=50$ to represent relatively precise observations.

Since both the FF and BSPF simulate $X_t$ from its prior, the particle weights are $w_{i}^j=f(y_{i}|x_{t_i}^j)$ where (with subscripts representing vector components here, only)\\ $f(y|x)=\mathsf{Gamma}(y_{1};\alpha,\alpha/x_1)\mathsf{Gamma}(y_{2};\alpha,\alpha/x_2)$. We measure success by
\begin{equation}
  \label{eqn.noisySuccess}
s_{i}^j=\frac{f(y_{i}|x_{t_i}^j)}{\sup_{x\in \cX}f(y_{i}|x)}.
\end{equation}
 One full unit of success is achieved when the state fits the observation as well as it can, and the poorer the fit the lower the amount of success. Appendix \ref{app.general.success} explans why this is a natural measure.

We fixed $\ess=T=20$ and investigated the stability of the variance and the increase in CPU time as the parameter vector fits less and less well with the data.
Specifically, let $\theta_*$ be the true parameter vector, given in Table \ref{tab:tabLVA}. We ran the particle filter $n_{\mathrm{rep}}=500$ times with parameter vector $\theta=\eta \theta_*$, with $\eta$ ranging between $0.72$ and $1.15$. This was performed with $m_+=10^4$ and $m_+=10^3$, denoted as FF4 and FF3, respectively. The BSPF was run with $m=120$ so that at the value of $\eta$ that maximised the likelihood, the sample variance of the log likelihood under the BSPF matched those under the FF. That these variances are $\approx 1$ suggests that setting $\ess=T$ may still be approximately valid for noisy observations when \eqref{eqn.noisySuccess} quantifies success.

\begin{table}
  \small
  \center
  \begin{tabular}{cc|rrr|rr|rr|r}
    \toprule
    $\eta$&$\ell$&$V_{\mathrm{FF4}}$&$V_{\mathrm{FF3}}$&$V_{\mathrm{BS}}$&$\mathrm{CPU}_{\mathrm{FF4}}$&$\mathrm{CPU}_{\mathrm{FF3}}$&$\max\overline{I}_{\mathrm{FF4}}$&$\max\overline{I}_{\mathrm{FF3}}$&$\overline{\overline{I}}_{\mathrm{FF3}}$\\
    \midrule
    0.72&-194.9&9.76&11.3&54.7&721.3&197.9&0.890&1.000&0.186\\
    0.75&-189.3&6.74&8.05&25.0&356.2&146.9&0.482&0.996&0.106\\
    0.80&-183.3&2.72&2.63&6.33&135.4&96.6&0.002&0.854&0.045\\
    0.85&-179.8&1.13&1.14&1.97&78.0&77.4&0&0.094&0.007\\
    0.90&-178.5&0.85&0.89&0.86&73.4&70.3&0&0.356&0.002\\
    0.95&-179.2&1.01&0.95&1.09&87.2&72.2&0&0.758&0.040\\
    1.00&-181.3&1.48&1.72&2.35&114.6&79.5&0.02&0.962&0.048\\
    1.05&-184.7&3.33&3.06&5.15&166.6&98.5&0.010&0.996&0.057\\
    1.10&-189.0&5.76&5.53&13.2&259.5&129.2&0.044&0.998&0.087\\
    1.15&-194.5&9.47&11.3&35.4&537.7&185.6&0.240&1.000&0.127\\
    \bottomrule
  \end{tabular}
  \caption{Lotka-Volterra SDE. The FF used $m_+=10^4$ (FF4) or $m_+=10^3$ (FF3); the BSPF used $m=120$. For each $\eta$, $500$ replicates of each filter were run. The table reports the log-likelihood (to 1dp), the sample variances of the log likelihood estimates for the three filters ($V_{FF4}$, $V_{FF3}$ and $V_{BS}$). The CPU times (in seconds) for the 500 runs of FF4 and FF3, the minimum (over observations) of the fraction of replicates where $\ess$ was not achieved ($\max\overline{I}_{\mathrm{FF4}}$ and $\max\overline{I}_{\mathrm{FF3}}$), and the mean (over observations) of this fraction for FF3 ($\overline{\overline{I}}_{\mathrm{FF3}}$).\label{table.LVvaryMpl}}
\end{table}

For each $\eta$, Table \ref{table.LVvaryMpl} shows the log likelihood, the sample variances of the $n_{rep}$ log likelihood estimates under FF4, FF3 and the BSPF and the CPU time taken by each algorithm; all bootstrap runs took between 59.7 and 61.2 CPU seconds so individual times are not reported separately. For each observation, we also found the fraction of replicates where the target amount of success was not achieved; this provided, for each observation, an estimate of $\Prob{S<\ess}$. Table \ref{table.LVvaryMpl} provides the maximum (over observations)  of these, and for FF3, the average (over observations), which bounds the same under FF4.

Firstly, recall the mathematically optimal sample variances in \cite{SheThiRobRos2015} and \cite{DouPitDelKoh2015} for the case where the idealised Metropolis--Hastings chain would mix slowly: $3.28$ and $2.82$, respectively. However, both analyses assume that $\Var{\ellhat}$ does not vary with $\theta$; since this is not true in practice, and the inefficiency of the underlying chain increases exponentially with the variance, both suggest caution, setting $\Var{\ellhat}\approx 1\text{--}2$ at a representative $\theta$. Table \ref{table.LVvaryMpl} show that this caution is justified: as the log-likelihood decreases to around $16$ below is maximum, $\Var{\ellhat}$ increases by nearly two orders of magnitude for the BSPF , but by only one order of magnitude for the FF. Moreover, it is only when the log likelihood is $10$ below its maximum that FF variance estimates are too large, whereas large variances occur for log-likelihood discrepancies of only $5$  when the BSPF is employed.   

The stability of the FF variance comes at an additional CPU cost whose size depends on $m_+$. The variances with FF3 are similar to those obtained from FF4, yet the more extreme computational costs are avoided, suggesting that $m_+=10^3$ might be preferable and a small but not negligible average failure fraction, rather than a small maximum fraction, might be a better tuning aim.

\subsection{CWD in Mule deer}\label{sec:mule}
Finally, we consider a real data set consisting of the cumulative number
of deaths of mule deer due to chronic wasting disease (CWD) observed at various times
for two separate outbreaks, 1974-1985 and 1992-2001 \cite[see e.g.][]{miller06}. The data have been analysed by \cite{sun15} and \cite{DrovMcC2016} among others; we follow the latter by treating each outbreak as independent and fitting the MJP representation of a susceptible, exposed, infected, removed (SEIR) model (with additional transitions for birth and natural death) to the full data set, denoted by $y=([y_{1:n_1}^{1}]^\top,[y_{1:n_2}^{2}]^\top)^\top$. The associated list of reactions, hazards and stoichiometries is given in Table~\ref{tab:tabSEIRA}. Here, the per capita contact rate is $\beta$, and the average time between exposure and death due to CWD ($1/\mu$) is apportioned between exposure and infection via the parameter $0\leq \alpha \leq 1$. The parameters $a$ and $m$ represent the number of deer added annually and the per-capita per year death date respectively; these are fixed at values included in the data set.    

\begin{table}[ht]
\centering
\small
	\begin{tabular}{@{}llll@{}}
         \toprule
Reaction & Hazard & Stoichiometry & Description \\
\midrule
$\mathcal{R}_1$: $\emptyset \longrightarrow S$ & $a$ & $(1,0,0,0)^\top$ & Birth \\
$\mathcal{R}_2$: $S\longrightarrow \emptyset$ & $m s_t$ & $(-1,0,0,0)^\top$ & Susceptible death (natural) \\
$\mathcal{R}_3$: $S+I\longrightarrow E$ & $\beta s_t i_t$ & $(-1,1,0,0)^\top$ & Contact \\ 
$\mathcal{R}_4$: $E\longrightarrow \emptyset$ & $m e_t$ & $(0,-1,0,0)^\top$ & Exposed death (natural)\\
$\mathcal{R}_5$: $E\longrightarrow I$ & $\frac{\mu}{\alpha} e_t$ & $(0,-1,1,0)^\top$ &  Infection\\
$\mathcal{R}_6$: $I\longrightarrow \emptyset$ & $m i_t$ & $(0,0,-1,0)^\top$ & Infective death (natural)\\
$\mathcal{R}_7$: $I\longrightarrow R$ & $\frac{\mu}{1-\alpha} i_t$ & $(0,0,-1,1)^\top$ & Infective death (CWD)\\
\bottomrule
\end{tabular}
	\caption{SEIR model. Reaction, hazard, stoichiometry (effect on each specie $S_t, I_t, E_t, R_t$) and description.}\label{tab:tabSEIRA}
\end{table}

For the parameters of interest, $\theta=(\beta,\mu,\alpha)^\top$, we followed \cite{DrovMcC2016} by taking an independent prior specification with $\beta\sim\mathsf{Beta}(2,10)$, $\mu\sim\mathsf{Beta}(2,5)$ and $\alpha\sim\mathsf{U}(0,1)$. We fixed $E_1=0$ and specified initial distributions $S_1\sim \mathsf{DiscreteU}(10, 50)$, and $I_1\sim\mathsf{DiscreteU}(0, 20)$ for both outbreaks. 

As with any real data application, the main monitoring run of PMMH with a Gaussian random walk proposal requires a sensible initial parameter value and innovation variance. These quantities can be obtained from the output of a pilot run of PMMH, which in turn requires a sensible $m^+$ for BSPF and additionally $\ess$ for FF. Therefore, using each of 5 values of $\theta$ sampled from the prior, we found values of $m_+$ that gave $\Var{\log \Phat(y)}\approx 1$ and selected the largest value, denoted $m_+^{0}$, for the pilot run. We repeated this process for FF with $m_+$ fixed at $m_+^0$ to find a suitable $\ess$, denoted $\ess^0$. Using these values, pilot runs of PMMH using both BSPF and FF were performed for $50K$ iterations, with the output used to estimate $\Expect{\theta|y}$ and $\Var{\log\theta|y}$. Finally, the main monitoring runs ($50K$ iterations) of PMMH used values $m_+=m_+^1$ and $\ess=\ess^1$ such that $\Var{\log \Phat(y)}\approx 1$ when running BSPF and FF with $\theta$ fixed at $\Expect{\theta|y}$.

\begin{figure}[ht]
\centering
\includegraphics[width=16.0cm,height=5.0cm]{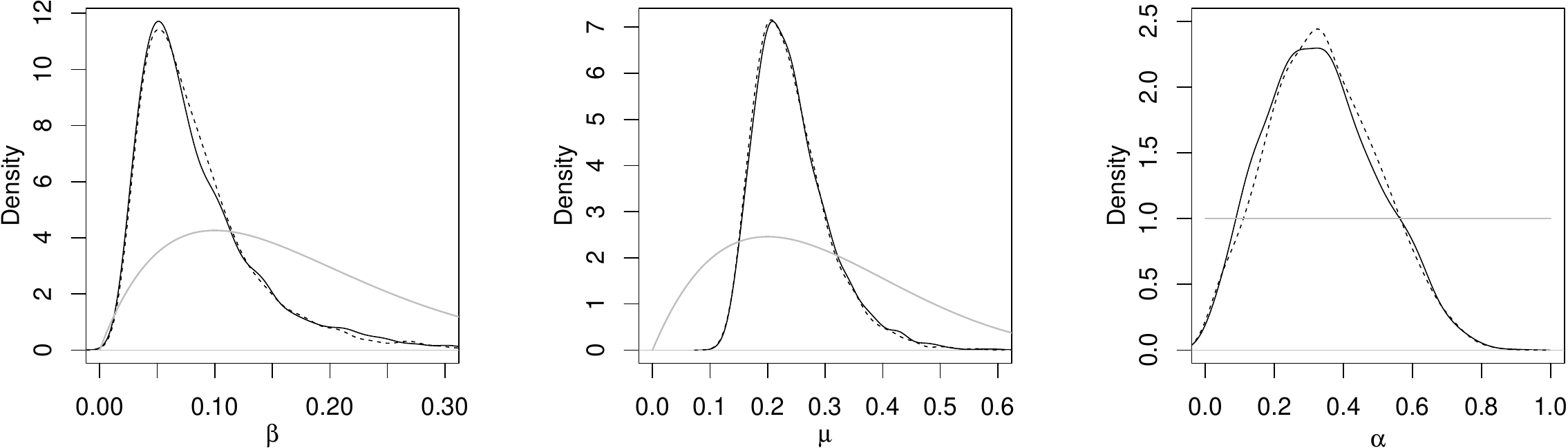}
\caption{SEIR model. Marginal posterior densities of $\beta$, $\mu$ and $\alpha$ based on the output of PMMH with FF (solid black lines) and BSPF (dashed black lines). The prior distribution is shown in grey.}
\label{fig:figSEIR}
\end{figure}

\begin{table}[ht]
\centering
\small
	\begin{tabular}{@{}l rrr rrr rrr@{}}
         \toprule
Filter  & $\ess^0$ &  $m_{+}^0$ & CPU$^0$ (s) & $\ess^1$ & $m_+^1$ & CPU$^1$ (s) & ESS & ESS/TCPU & Rel \\
\midrule
BSPF & - & 6000 & 407628 & - & 1000 & 24777 & 1363 & 0.00315 & 1 \\
FF & 300 & 6000 & \phantom{0}89500 & 90 & 1000 & 15338 & 1328 & 0.01267 & 4\\
\bottomrule
\end{tabular}
	\caption{SEIR model. Filter used inside PMMH, required total success $\ess^0$ ($\ess^1$), maximum number of simulations $m_{+}^0$ ($m_+^1$) and CPU time (in seconds) CPU$^0$ (CPU$^1$) used in the pilot run (main monitoring run), effective sample size (ESS) from the main monitoring run and effective sample size per total CPU time (ESS/TCPU).}\label{tab:tabSEIRB}
\end{table}

Figure~\ref{fig:figSEIR} shows marginal posterior densities for each parameter value, which are consistent across schemes and with the output reported in \cite{DrovMcC2016}. The performance of BSPF and FF can found in Table~\ref{tab:tabSEIRB}. It is evident that, based on the main monitoring run alone, FF outperforms BSPF in terms of overall efficiency by a factor of around 1.6. When the computational cost of the pilot run is taken into account, relative efficiency increases to a factor of 4. That is, during the pilot run under FF and after burn-in, most sampled parameter values require many fewer simulations to obtain the desired number of matches, relative to the number of particles required by BSPF during its pilot run.

\section{Discussion}
\label{sec.Discussion}
We have introduced the Frankenfilter, a close-to-fully alive particle filter that adapts the number of simulations to the difficulty of the problem whilst employing a principled mechanism for bounding the total amount of compute. 

In Section \ref{sec.tune}, we derived advice for choosing the total number of successes, $\ess$, required over each inter-observation interval; in the special case of $T$ exact observations, this simply involves setting $\ess\approx T$. This advice was derived for the alive particle filter, but we also showed that choosing $m_+=\kappa\ess/p$ for a moderately sized $\kappa$ controls the impact of $m_+$ and $m_-$ on the variance of the estimator. Section \ref{sec.sims} validated our the advice on the Frankenfilter.  

The literature backs up the intuition that our advice of choosing $\ess\approx T$ should also apply to PMMH using the alive filter with a large hard threshold. 
\cite{DrovPetMcC2016} uses a hard threshold of $10^5$ to perform inference via particle Metropolis--Hastings. Preliminary runs allow tuning parameters to be set to approximately optimise efficiency. Where the values of $\ess$ are reported, we can see that: in Experiment 1, $\ess=101$ and $T\approx 135$ (before the observations are uninformative zeros), in Experiment 3, $\ess=201$ and $T\approx 240$ and for Experiment 4, $\ess=101$ and $T=100$. Using the same hard threshold, \cite{GoliSher2019} examines a data set with $T=7$ and finds $\ess=8$ to be optimal.

We employed $s(x^j)=f(y|x^j)/\sup_{x\in \cX}f(y|x)$, justified heuristically in Appendix \ref{app.general.success}; however, the Frankenfilter is valid for any choice of $s(\cdot)$, including the constant function, whereupon (as pointed out by a reviewer) it reduces to the standard particle filter. Whilst our choice works well in the simulation studies, it may not be optimal: further work could investigate other possible success functions. Similarly, of our tuning results, the advice derived from Proposition \ref{prop.bound} applies, strictly, to full-vector, exact observations, Theorem \ref{thrm.partialVariance} applies to partial, exact observations, and Propositions \ref{prop.GenOneStepBound} and \ref{prop.BoundProbMissS} require $s(\cdot)\propto w(\cdot)$. Simulation studies suggest that the advice from these, of taking $\ess\approx T$ and choosing $m_+$ such that it is reached on a small but non-negligible fraction of occasions, is at least reasonable, but there is considerable scope for further investigation.  

Comparisons in Section \ref{sec.sims} demonstrated the bias of the alive particle filter with a hard threshold, even when that threshold is high, highlighting the need for the Frankenfilter. The studies tackled a variety of Markov process and filtering scenarios and demonstrated the greater robustness and efficiency of the Frankenfilter when compared with an equivalent standard filter with a fixed number of particles; in some cases the improvement in efficiency was of over an order of magnitude.

In Section \ref{sec.LVSDE}, when $\max \overline{I}_{\mathrm{FF4}}$ is small, the second term in Proposition \ref{prop.GenOneStepBound} is controlled, yet the variance under the Frankenfilter still increases as $\eta$ varies. Whilst the Frankenfilter controls for the difficulty of simulating a good path, it does not control for the positive correlation between the averages of the weights from one observation to the next.

Consistent with the studies of \cite{Alive2015} for the alive particle filter without any threshold, we have provided non-asymptotic bounds on the relative variance of the estimator of the likelihood under certain assumptions, since such quantities are central to the performance of the PMMH algorithm. Central limit theorems (CLTs) are in principle likely to hold under appropriate assumptions, although they are less directly relevant to the use of the algorithm in PMMH. For example, \cite{legland2006sequential} prove a CLT for an estimator of the likelihood that is a biased but an asymptotically negligible perturbation of the estimator considered here, in the case where there is no hard thresholding. It is reasonable to expect that their result can be transferred, when both the hard threshold $m_+$ and $\mathfrak{s}$ diverge to $\infty$. We leave this to future work, as there are several technical details involved in such arguments, as one may deduce from \cite{legland2006sequential}.

Appendix \ref{app.furtherDiscussion} discusses further generalisations and applications of the Frankenfilter.

\textbf{Data Availability Statement}: all data used in this paper were simulated randomly from the models, as specified in Section \ref{sec.sims}.

This research was conducted without external funding.

\textbf{Acknowledgements}: the penultimate simulation study was part-inspired by initial investigations on the use of the Frankenfilter with noisy observations by Andr\'{e} Menezes and the first author.

\bibliographystyle{apalike}

\bibliography{FrankBib.bib}

\begin{thebibliography}{}

\bibitem[Amrein and K\"{u}nsch, 2011]{AmrKun2011}
Amrein, M. and K\"{u}nsch, H.~R. (2011).
\newblock A variant of importance splitting for rare event estimation: Fixed
  number of successes.
\newblock {\em ACM Trans. Model. Comput. Simul.}, 21(2).

\bibitem[Boys et~al., 2008]{BWK08}
Boys, R.~J., Wilkinson, D.~J., and Kirkwood, T. B.~L. (2008).
\newblock Bayesian inference for a discretely observed stochastic kinetic
  model.
\newblock {\em Statistics and Computing}, 18:125--135.

\bibitem[Cao et~al., 2007]{Cao07}
Cao, Y., Gillespie, D.~T., and Petzold, L.~R. (2007).
\newblock Adaptive explicit-implicit tau-leaping method with automatic tau
  selection.
\newblock {\em The Journal of Chemical Physics}, 126(22):224101.

\bibitem[{del Moral} et~al., 2015]{Alive2015}
{del Moral}, P., Jasra, A., Lee, A., Yau, C., and Zhang, X. (2015).
\newblock The alive particle filter and its use in particle {M}arkov chain
  {M}onte {C}arlo.
\newblock {\em Stochastic Analysis and Applications}, 33(6):943--974.

\bibitem[Deligiannidis et~al., 2018]{DDP2018}
Deligiannidis, G., Doucet, A., and Pitt, M.~K. (2018).
\newblock {The Correlated Pseudomarginal Method}.
\newblock {\em Journal of the Royal Statistical Society Series B: Statistical
  Methodology}, 80(5):839--870.

\bibitem[Doucet et~al., 2015]{DouPitDelKoh2015}
Doucet, A., Pitt, M.~K., Deligiannidis, G., and Kohn, R. (2015).
\newblock Efficient implementation of {M}arkov chain {M}onte {C}arlo when using
  an unbiased likelihood estimator.
\newblock {\em Biometrika}, 102(2):295--313.

\bibitem[Drovandi and McCutchan, 2016]{DrovMcC2016}
Drovandi, C.~C. and McCutchan, R.~A. (2016).
\newblock Alive {SMC}2: Bayesian model selection for low-count time series
  models with intractable likelihoods.
\newblock {\em Biometrics}, 72(2):344--353.

\bibitem[Drovandi et~al., 2016]{DrovPetMcC2016}
Drovandi, C.~C., Pettitt, A.~N., and McCutchan, R.~A. (2016).
\newblock {Exact and Approximate Bayesian Inference for Low Integer-Valued Time
  Series Models with Intractable Likelihoods}.
\newblock {\em Bayesian Analysis}, 11(2):325 -- 352.

\bibitem[Gillespie, 1977]{Gillespie77}
Gillespie, D.~T. (1977).
\newblock Exact stochastic simulation of coupled chemical reactions.
\newblock {\em Journal of Physical Chemistry}, 81:2340--2361.

\bibitem[Gillespie, 2001]{Gillespie01}
Gillespie, D.~T. (2001).
\newblock Approximate accelerated stochastic simulation of chemically reacting
  systems.
\newblock {\em Journal of Chemical Physics}, 115(4):1716--1732.

\bibitem[Golightly and Sherlock, 2019]{GoliSher2019}
Golightly, A. and Sherlock, C. (2019).
\newblock Efficient sampling of conditioned {M}arkov jump processes.
\newblock {\em Statistics and Computing}, 29(5):1149--1163.

\bibitem[Golightly and Sherlock, 2022]{GolSher2022}
Golightly, A. and Sherlock, C. (2022).
\newblock Augmented pseudo-marginal {M}etropolis–{H}astings for partially
  observed diffusion processes.
\newblock {\em Statistics and Computing}, 32.

\bibitem[Golightly and Wilkinson, 2015]{GoliWilk15}
Golightly, A. and Wilkinson, D.~J. (2015).
\newblock Bayesian inference for {M}arkov jump processes with informative
  observations.
\newblock {\em SAGMB}, 14(2):169--188.

\bibitem[Gordon et~al., 1993]{GSS1993}
Gordon, N., Salmond, D., and Smith, A. (1993).
\newblock Novel approach to nonlinear/non-{G}aussian {B}ayesian state
  estimation.
\newblock {\em IEE Proceedings F (Radar and Signal Processing)}, 140:107--113.

\bibitem[Kremers, 1987]{Kremers1987}
Kremers, W.~K. (1987).
\newblock An improved estimator of the mean for a sequential binomial sampling
  plan.
\newblock {\em Technometrics}, 29(1):109--112.

\bibitem[Kudlicka et~al., 2020]{KudMurSchLin2020}
Kudlicka, J., Murray, L.~M., Schön, T.~B., and Lindsten, F. (2020).
\newblock Particle filter with rejection control and unbiased estimator of the
  marginal likelihood.
\newblock In {\em ICASSP 2020 - 2020 IEEE International Conference on
  Acoustics, Speech and Signal Processing (ICASSP)}, pages 5860--5864.

\bibitem[Le~Gland and Oudjane, 2006]{legland2006sequential}
Le~Gland, F. and Oudjane, N. (2006).
\newblock A sequential particle algorithm that keeps the particle system alive.
\newblock In {\em Stochastic hybrid systems: Theory and safety critical
  applications}, pages 351--389. Springer.

\bibitem[LeGland and Oudjane, 2004]{LeGlandOudjane2004}
LeGland, F. and Oudjane, N. (2004).
\newblock Stability and uniform approximation of nonlinear filters using the
  {H}ilbert metric and application to particle filters.
\newblock {\em Annals of Applied Probability}, 14(1):144--187.

\bibitem[LeGland and Oudjane, 2005]{LeGlandOudjane2005}
LeGland, F. and Oudjane, N. (2005).
\newblock A sequential particle algorithm that keeps the particle system alive.
\newblock In {\em 2005 13th European Signal Processing Conference}, pages 1--4.

\bibitem[Miller et~al., 2006]{miller06}
Miller, M.~W., Hobbs, N.~T., and Tavener, S.~J. (2006).
\newblock Dynamics of prion disease transmission in mule deer.
\newblock {\em Ecological Applications}, 16:2208--2214.

\bibitem[Pathak, 1976]{Pathak1976}
Pathak, P.~K. (1976).
\newblock Unbiased estimation in fixed cost sequential sampling schemes.
\newblock {\em The Annals of Statistics}, 4(5):1012--1017.

\bibitem[Pitt et~al., 2012]{PitSilGioKoh2012}
Pitt, M.~K., dos Santos~Silva, R., Giordani, P., and Kohn, R. (2012).
\newblock On some properties of {M}arkov chain {M}onte {C}arlo simulation
  methods based on the particle filter.
\newblock {\em Journal of Econometrics}, 171(2):134--151.
\newblock Bayesian Models, Methods and Applications.

\bibitem[Pitt and Shephard, 1999]{PittShep1999}
Pitt, M.~K. and Shephard, N. (1999).
\newblock Filtering via simulation: Auxiliary particle filters.
\newblock {\em Journal of the American Statistical Association},
  94(446):590--599.

\bibitem[Schmon et~al., 2021]{schmon2021}
Schmon, S.~M., Deligiannidis, G., Doucet, A., and Pitt, M.~K. (2021).
\newblock Large sample asymptotics of the pseudo-marginal method.
\newblock {\em Biometrika}, 108(1):37--51.

\bibitem[Shelley, 1818]{Shelley1818}
Shelley, M. (1818).
\newblock {\em Frankenstein; or, The Modern Prometheus}.
\newblock Lackington, Hughes, Harding, Mavor and Jones, Finsbury Square,
  London, UK.

\bibitem[Sherlock, 2024]{SheVB2024}
Sherlock, C. (2024).
\newblock Variance bounds and robust tuning for pseudo-marginal
  {M}etropolis--hastings algorithms.

\bibitem[Sherlock et~al., 2015]{SheThiRobRos2015}
Sherlock, C., Thiery, A.~H., Roberts, G.~O., and Rosenthal, J.~S. (2015).
\newblock On the efficiency of pseudo-marginal random walk {M}etropolis
  algorithms.
\newblock {\em Annals of Statistics}, 43(1):238--275.

\bibitem[Sisson et~al., 2018]{Sisson2018ABC}
Sisson, S.~A., Fan, Y., and Beaumont, M.~A., editors (2018).
\newblock {\em Handbook of Approximate Bayesian Computation}.
\newblock Chapman \& Hall/CRC, New York.

\bibitem[Sun et~al., 2015]{sun15}
Sun, L., Lee, C., and Hoeting, J.~A. (2015).
\newblock Parameter inference and model selection in deterministic and
  stochastic dynamical models via approximate {B}ayesian computation: modeling
  a wildlife epidemic.
\newblock {\em Environmetrics}, 26:451--462.

\bibitem[Vats et~al., 2019]{vats19}
Vats, D., Flegal, J.~M., and Jones, G.~L. (2019).
\newblock Multivariate output analysis for {M}arkov chain {M}onte {C}arlo.
\newblock {\em Biometrika}, 106(2):321--337.

\bibitem[Wilkinson, 2018]{Wilkinson06}
Wilkinson, D.~J. (2018).
\newblock {\em Stochastic Modelling for Systems Biology}.
\newblock Chapman \& Hall/CRC Press, Boca Raton, Florida, 3rd edition.

\end{thebibliography}

\appendix

\section{Proof of Theorem \ref{thrm.unbiased}}
\label{sec.proof.unbiased.gen}
Our proof is inspired by \cite{PitSilGioKoh2012} and \cite{KudMurSchLin2020}.

Let $\cF_{t}$ comprise all of the random variables simulated up to and including time $t$. We define $\Phat(y_1)$ to be the random variable whose realisation is $\phat_1$. For $t\ge 2$, we define $\Phat(y_{t-h:t}|y_{1:t-h-1})$ to be the random variable whose realisation is $\prod_{j=t-h}^t \phat_j$, with $\Phat(y_t|y_{1:t-1}):=\Phat(y_{t:t}|y_{1:t-1})$. We also define $I_t=1$ if $K_t=1$ and $I_t=0$ if $K_t\in\{0,2\}$. 

Analogously to the single-observation case, if $I_t=1$ then $(W_t^1,S_t^1),\dots,(W_t^{M_t-1},S_t^{M_t-1})$ are exchangeable, and if $I_t=0$, $(W_t^1,S_t^1),\dots,(W_t^{M_t},S_t^{M_t})$ are exchangeable. 

\begin{lemma}
  \label{lemma.base}
  \[
  \Expect{\Phat(y_t|y_{1:t-1})|\cF_{t-1}}=
i_{t-1}\frac{\sum_{j=1}^{m_{t-1}-1}w_{t-1}^j\Prob{Y_t=y_t|x_{t-1}^j}}{\sum_{k=1}^{m_{t-1}-1}w_{t-1}^k}
+
(1-i_{t-1})\frac{\sum_{j=1}^{m_{t-1}}w_{t-1}^j\Prob{Y_t=y_t|x_{t-1}^j}}{\sum_{k=1}^{m_{t-1}}w_{t-1}^k}.
  \]
\end{lemma}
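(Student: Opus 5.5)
Conditional on $\cF_{t-1}$, the pool of time-$(t-1)$ particles $x_{t-1}^{1:m_{\mathrm{use}}}$ and their weights $w_{\mathrm{use}}^{1:m_{\mathrm{use}}}$ are fixed, with $m_{\mathrm{use}}=m_{t-1}-1$ if $i_{t-1}=1$ and $m_{\mathrm{use}}=m_{t-1}$ if $i_{t-1}=0$. The plan is to reduce the time-$t$ step to Proposition \ref{prop.GenOneUnbiased} applied with a suitable proposal $Q$.

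\textbf{Step 1: the time-$t$ draws form an i.i.d.\ sequence given $\cF_{t-1}$.} Each time-$t$ simulation is produced in two stages. First an ancestor index $a$ is drawn from $\{1,\dots,m_{\mathrm{use}}\}$ with probability $w_{\mathrm{use}}^a/\sum_k w_{\mathrm{use}}^k$. Then a path $x_{(t-1,t]}\sim q(\cdot|x_{t-1}^a)$ is drawn. The pairs $(a^j,x^j_{(t-1,t]})$, $j=1,2,\dots$, are therefore i.i.d.\ from a fixed mixture proposal $Q_t$, with no further dependence on $\cF_{t-1}$. The weight $w^j$ and success $s^j$ are deterministic functions of each such pair. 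The stopping rule and the choice between averaging over $m$ or $m-1$ terms at time $t$ are exactly those of Algorithm \ref{OneStepFrank}, with the same $\ess,m_-,m_+$.

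\textbf{Step 2: apply Proposition \ref{prop.GenOneUnbiased} conditionally.} That proposition holds for an arbitrary proposal $Q$ and arbitrary non-negative $w$, $s$. Applying it conditionally on $\cF_{t-1}$ gives
\[
\Expect{\Phat(y_t|y_{1:t-1})\,|\,\cF_{t-1}}=\Expect{W_t^1\,|\,\cF_{t-1}}.
\]
It remains to evaluate the right-hand side. Condition on the ancestor $a$. The importance weight $f(y_t|x_t)\,p(x_{(t-1,t]}|x_{t-1}^a)/q(x_{(t-1,t]}|x_{t-1}^a)$ then has $q$-expectation equal to
\[
\int f(y_t|x_t)\,p(\md x_{(t-1,t]}|x_{t-1}^a)=\Prob{Y_t=y_t|x_{t-1}^a}.
\]
This step needs the usual absolute continuity of $p$ with respect to $q$ on the set where $f>0$. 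Averaging over the ancestor distribution gives
\[
\Expect{W_t^1\,|\,\cF_{t-1}}=\sum_{j=1}^{m_{\mathrm{use}}} w_{t-1}^j\,\Prob{Y_t=y_t|x_{t-1}^j}\Big/\sum_{k=1}^{m_{\mathrm{use}}} w_{t-1}^k.
\]
Substituting the two cases for $m_{\mathrm{use}}$ according to $i_{t-1}$ yields the stated formula.

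\textbf{Main obstacle.} The delicate point is Step 1: justifying that the exchangeability argument of Proposition \ref{prop.GenOneUnbiased} survives conditioning on $\cF_{t-1}$. This requires checking that the time-$t$ stopping decision uses only the current pairs $(W_t^j,S_t^j)$, and that it does not use the selection that fixed $m_{\mathrm{use}}$ at time $t-1$. The latter is already measurable with respect to $\cF_{t-1}$. Given $\cF_{t-1}$ the pairs are genuinely i.i.d., so the exchangeability statements for $K_t\in\{0,1,2\}$ carry over verbatim.

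A minor care point is the degenerate case $\sum_k w_{\mathrm{use}}^k=0$. There the estimate $\phat_{t-1}$ is already zero, and the convention adopted for this case must be stated consistently so that it does not affect the product.
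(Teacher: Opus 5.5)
Your proof is correct and follows the paper's argument. The paper also uses exchangeability of the time-$t$ pairs given $\cF_{t-1}$, split by $I_t$ and recombined with the tower law, to get $\Expect{\Phat(y_t|y_{1:t-1})|\cF_{t-1}}=\Expect{W_t^1|\cF_{t-1}}$, then conditions on the ancestor $A_{t-1}^1$ and averages over the resampling probabilities. The only difference is packaging: you invoke Proposition \ref{prop.GenOneUnbiased} conditionally, with the mixture proposal over (ancestor, path) pairs, rather than re-running the exchangeability argument inline; your remarks on absolute continuity and the zero-weight degenerate case are sensible additions that the paper leaves implicit.
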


\textbf{Proof}: By exchangeability,
\begin{align*}
  \Expect{\Phat(y_t|y_{1:t-1})|\cF_{t-1},I_t=1}
  &=
  \Expect{\frac{\sum_{j=1}^{M_t-1}W_{t}^j}{M_t-1}|\cF_{t-1},I_t=1}
  =
  \Expect{W_t^1|\cF_{t-1},I_t=1}.\\
  \Expect{\Phat(y_t|y_{1:t-1})|\cF_{t-1},I_t=0}
  &=
\Expect{\frac{\sum_{j=1}^{M_t}W_{t}^j}{M_t}|\cF_{t-1},I_t=0}
=
\Expect{W_t^1|\cF_{t-1},I_t=0}.
\end{align*}
Therefore
\begin{align*}
\Expect{\Phat(y_t|y_{1:t-1})|\cF_{t-1}}
&=
\Expect{\Expect{\Phat(y_t|y_{1:t-1})|I_t}|\cF_{t-1}}
=
\Expect{\Expect{W_t^1|I_t,\cF_{t-1}}|\cF_{t-1}}
=
\Expect{W^1_t|\cF_{t-1}}.
\end{align*}
But $\Expect{W_t^i|\cF_{t-1},a_{t-1}^i}=\Prob{Y_t=y_t|x_{t-1}^{a_{t-1}^i}}$, so
\begin{align*}
  \Expect{W^1_t|\cF_{t-1}}
&=
\Expect{\Expect{W^1_t|A_{t-1}^1}|\cF_{t-1}}
=
\Expect{\Prob{Y_t=y_t|x_{t-1}^{A_{t-1}^1}}|\cF_{t-1}}\\
&=
i_{t-1}\frac{\sum_{j=1}^{m_{t-1}-1}w_{t-1}^j\Prob{Y_t=y_t|x_{t-1}^j}}{\sum_{k=1}^{m_{t-1}-1}w_{t-1}^k}
+
(1-i_{t-1})\frac{\sum_{j=1}^{m_{t-1}}w_{t-1}^j\Prob{Y_t=y_t|x_{t-1}^j}}{\sum_{k=1}^{m_{t-1}}w_{t-1}^k}.~\square
\end{align*}

\begin{lemma}
  \label{lemma.induction}
  \begin{align*}
  \Expect{\Phat(y_{t-h:t}|y_{1:t-h-1})|\cF_{t-h-1}}&=
i_{t-h-1}\frac{\sum_{j=1}^{m_{t-h-1}-1}w_{t-h-1}^j\Prob{Y_{t-h:t}=y_{t-h:t}|x_{t-h-1}^j}}{\sum_{k=1}^{m_{t-h-1}-1}w_{t-h-1}^k}\\
&+
(1-i_{t-h-1})\frac{\sum_{j=1}^{m_{t-h-1}}w_{t-h-1}^j\Prob{Y_{t-h:t}=y_{t-h:t}|x_{t-h-1}^j}}{\sum_{k=1}^{m_{t-h-1}}w_{t-h-1}^k}.
\end{align*}
\end{lemma}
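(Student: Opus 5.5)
We prove Lemma \ref{lemma.induction} by induction on $h\ge 0$. Lemma \ref{lemma.base} is exactly the case $h=0$. For the inductive step, fix $h\ge 1$ and assume the statement holds for $h-1$ at every time index. Then the statement for $\Phat(y_{t-h+1:t}|y_{1:t-h})$ conditional on $\cF_{t-h}$ holds, with the pool of time-$(t-h)$ particles indexed by $j\le m_{t-h}-i_{t-h}$.

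The first step is the factorisation
\[
\Phat(y_{t-h:t}|y_{1:t-h-1})=\Phat(y_{t-h}|y_{1:t-h-1})\,\Phat(y_{t-h+1:t}|y_{1:t-h}).
\]
Since $\Phat(y_{t-h}|y_{1:t-h-1})$ is $\cF_{t-h}$-measurable, the tower law conditioning on $\cF_{t-h}$ and the induction hypothesis give
\[
\Expect{\Phat(y_{t-h:t}|y_{1:t-h-1})|\cF_{t-h-1}}
=\Expect{\frac{1}{M_{t-h}-I_{t-h}}\sum_{j=1}^{M_{t-h}-I_{t-h}}W_{t-h}^j\,g(X_{t-h}^j)\,\Big|\,\cF_{t-h-1}},
\]
where $g(x):=\Prob{Y_{t-h+1:t}=y_{t-h+1:t}|x}$. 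The key cancellation is that the normalising sum $\sum_k w_{t-h}^k$ in the induction hypothesis, divided by $m_{t-h}-i_{t-h}$, is exactly $\phat_{t-h}$. Multiplying by $\phat_{t-h}$ therefore replaces the self-normalised average by a plain average of $w^j g(x^j)$ over the pool.

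The next step is to handle this plain average exactly as in Lemma \ref{lemma.base}. Conditional on $\cF_{t-h-1}$ and on $I_{t-h}$, the pairs $(W_{t-h}^j S_{t-h}^j, X_{t-h}^j)$ over the pool are exchangeable, so the triples $(W^j,S^j,X^j)$ are exchangeable too. The stopping rule depends only on the $S$'s and the draws are i.i.d. given $\cF_{t-h-1}$, so the exchangeability argument behind Proposition \ref{prop.GenOneUnbiased} applies verbatim to the functional $W^j g(X^j)$ in place of $W^j$. Averaging over $I_{t-h}$ with the tower law then yields $\Expect{W_{t-h}^1 g(X_{t-h}^1)|\cF_{t-h-1}}$.

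The final step conditions on the ancestor $A_{t-h-1}^1$. We use
\[
\Expect{W_{t-h}^1 g(X_{t-h}^1)|\cF_{t-h-1},A^1_{t-h-1}=j}
=\int f(y_{t-h}|x_{t-h})\,p(x_{(t-h-1,t-h]}|x^j_{t-h-1})\,g(x_{t-h})\,\md x
=\Prob{Y_{t-h:t}=y_{t-h:t}|x_{t-h-1}^j},
\]
by the importance weight identity and the Markov property. Averaging over the ancestor, which is drawn with probability proportional to $w_{t-h-1}^j$ from the pool of size $m_{t-h-1}-i_{t-h-1}$, gives the stated two-case formula. For $t-h=1$ the ancestors are drawn from $p_0$ instead, which is handled identically.

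The main obstacle is the exchangeability step. We must justify carefully that conditioning on $I_{t-h}$, and on the information revealed by the stopping rule, keeps the joint law of $(W^j,S^j,X^j)$ over the pool exchangeable. Here the thresholded particle is excluded when $K=1$, and the whole batch is retained when $K\in\{0,2\}$. I would make this rigorous by writing the event $\{M=m,I=i\}$ as a symmetric function of the pool's $S$-values, with the excluded particle's $S$ entering separately when $K=1$. Exchangeability of the retained triples then follows from the i.i.d. structure given $\cF_{t-h-1}$.
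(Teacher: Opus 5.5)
Your proposal is correct and follows essentially the same route as the paper. The shared steps are: induction from Lemma \ref{lemma.base}; the factorisation with the tower law over the intermediate filtration; cancellation of the self-normalising constant against $\phat_{t-h}$; exchangeability of the retained particles given $I_{t-h}$, which reduces the problem to $\Expect{W_{t-h}^1 g(X_{t-h}^1)\,|\,\cF_{t-h-1}}$; and finally conditioning on the ancestor and using the importance-weight identity and the Markov property.

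Your plan to write each event $\{M=m,I=i\}$ as a symmetric function of the retained particles' success values makes explicit a step the paper only asserts. Your remark about $t-h=1$ is unnecessary: the lemma needs $t-h\ge 2$, and the first interval is handled separately in the proof of Theorem \ref{thrm.unbiased}.
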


\textbf{Proof}: The proof is by induction. The case of $h=0$ is covered by Lemma \ref{lemma.base}. We now assume that it holds for $h$ and show that it holds for $h+1$.

Replacing $h$ with $h+1$ in the left hand side of the equality in the lemma statement,
\begin{align*}
  \Expect{\Phat(y_{t-h-1:t}|y_{1:t-h-2})|\cF_{t-h-2}}
  &=
  \Expect{\Phat(y_{t-h:t}|y_{1:t-h-1})\Phat(y_{t-h-1}|y_{1:t-h-2})|\cF_{t-h-2}}\\
  &=
  \Expect{\Expect{\Phat(y_{t-h:t}|y_{1:t-h-1})\Phat(y_{t-h-1}|y_{1:t-h-2})|\cF_{t-h-1}}\cF_{t-h-2}}\\
  &=
  \Expect{\Phat(y_{t-h-1}|y_{1:t-h-2})\Expect{\Phat(y_{t-h:t}|y_{1:t-h-1})|\cF_{t-h-1}}\cF_{t-h-2}}.
\end{align*}
Now, if $I_{t-h-1}=1$, from the definition of $\Phat(y_{t-h-1}|y_{1:t-h-2})$ and the inductive hypothesis,
\begin{align*}
\Phat(y_{t-h-1}|y_{1:t-h-2})
  &=
  \frac{1}{M_{t-h-1}-1}\sum_{k=1}^{M_{t-h-1}-1}W^k_{t-h-1}
  \\
  \Expect{\Phat(y_{t-h:t}|y_{1:t-h-1})|\cF_{t-h-1}}
  &=
  \frac{\sum_{j=1}^{M_{t-h-1}-1}W_{t-h-1}^j\Prob{Y_{t-h-1:t}=y_{t-h:t}|X_{t-h-1}^j}}{\sum_{k=1}^{M_{t-h-1}-1}W_{t-h-1}^k}.
\end{align*}
Thus,
\begin{align*}
  \Phat(y_{t-h-1}|y_{1:t-h-2})    \Expect{\Phat(y_{t-h:t}|y_{1:t-h-1})|\cF_{t-h-1}}&=
   \frac{1}{M_{t-h-1}-1}\sum_{j=1}^{M_{t-h-1}-1}W_{t-h-1}^j\Prob{Y_{t-h:t}=y_{t-h:t}|X_{t-h-1}^j}.
\end{align*}
Exchangeability then gives
\begin{align*}
  \Expect{\Phat(y_{t-h-1:t}|y_{1:t-h-2})|\cF_{t-h-2},I_{t-h-1}=1}
  &=\Expect{W_{t-h-1}^1\Prob{Y_{t-h:t}=y_{t-h:t}|X_{t-h-1}^1}|\cF_{t-h-2},I_{t-h-1}=1}.
\end{align*}
When $I_{t-h-1}=0$, $M_{t-h-1}-1$ is replaced by $M_{t-h-1}$ in the above, and analogous steps give
\begin{align*}
  \Expect{\Phat(y_{t-h-1:t}|y_{1:t-h-2})|\cF_{t-h-2},I_{t-h-1}=0}
  =\Expect{W_{t-h-1}^1\Prob{Y_{t-h:t}=y_{t-h:t}|X_{t-h-1}^1}|\cF_{t-h-2},I_{t-h-1}=0}.
\end{align*}
So
\begin{align*}
\Expect{\Phat(y_{t-h-1:t}|y_{1:t-h-2})|\cF_{t-h-2}}
&=
\Expect{\Expect{\Phat(y_{t-h-1:t}|y_{1:t-h-2})|\cF_{t-h-2},I_{t-h-1}}|\cF_{t-h-2}}\\
&=
\Expect{\Expect{W_{t-h-1}^1\Prob{Y_{t-h:t}=y_{t-h:t}|X_{t-h-1}^1}|\cF_{t-h-2},I_{t-h-1}}|\cF_{t-h-2}}\\
&=
\Expect{W_{t-h-1}^1\Prob{Y_{t-h:t}=y_{t-h:t}|X_{t-h-1}^1}|\cF_{t-h-2}}.
\end{align*}
Conditional on $\cF_{t-h-2}$,
\[
W_{t-h-1}^1= \frac{p(X^1_{(t-h-2,t-h-1]}|x_{t-h-2}^{A^1_{t-h-2}})}{q(X^1_{(t-h-2,t-h-1]}|x_{t-h-2}^{A^1_{t-h-2}})}
 \Prob{Y_{t-h-1}=y_{t-h-1}|X_{t-h-1}^1},
 \]
 so $\Expect{W_{t-h-1}^1\Prob{Y_{t-h:t}=y_{t-h:t}|X_{t-h-1}^1}|\cF_{t-h-2}}$ is
\begin{align*}
\Expect{
 \frac{p(X^1_{(t-h-2,t-h-1]}|x_{t-h-2}^{A^1_{t-h-2}})}{q(X^1_{(t-h-2,t-h-1]}|x_{t-h-2}^{A^1_{t-h-2}})}
 \Prob{Y_{t-h-1:t}=y_{t-h-1:t}|X_{t-h-1}^1}
  |\cF_{t-h-2}}.
\end{align*}
Finally, $i_{t-h-2}=1$, $X^1_{(t-h-2,t-h-1]}$ is sampled by choosing an ancestor $j:=a_{t-h-2}^1$ with probability proportional to $w^j_{t-h-2}$ with $j\in \{1,\dots,m_{t-h-2}-1\}$ if $i_{t-h-2}=1$ and $j\in \{1,\dots,m_{t-h-2}\}$ if $i_{t-h-2}=0$, then propagating according to $q(x_{(t-h-2,t-h-1]}|x_{t-h-2}^j)$. Thus, if $i_{t-h-2}=1$, the required expectation is
\[
\frac{1}{\sum_{k=1}^{m_{t-h-2}-1}w_{t-h-2}^k}
\sum_{j=1}^{m_{t-h-2}-1}w_{t-h-2}^j~\Expect{ \frac{p(X_{(t-h-2,t-h-1]}|x_{t-h-2}^j)}{q(X_{(t-h-2,t-h-1]}|x_{t-h-2}^j)}
 \Prob{Y_{t-h-1:t}=y_{t-h-1:t}|X_{t-h-1}}
}
\]
which is
\[
\frac{1}{\sum_{k=1}^{m_{t-h-2}-1}w_{t-h-2}^k}
\sum_{j=1}^{m_{t-h-2}-1}w_{t-h-2}^j ~\Prob{Y_{t-h-1:t}=y_{t-h-1:t}|x_{t-h-2}^j}.
\]
Analogously, if $i_{t-h-2}=0$, we obtain
\[
\frac{1}{\sum_{k=1}^{m_{t-h-2}}w_{t-h-2}^k}
\sum_{j=1}^{m_{t-h-2}}w_{t-h-2}^j ~\Prob{Y_{t-h-1:t}=y_{t-h-1:t}|x_{t-h-2}^j}.
\]
These give the formula stated in the lemma but with $h$ replaced by $h+1$, as required. $\square$

Finally, we are in a position to prove Theorem \ref{thrm.unbiased} itself.

Substitute $h=t-2$ into Lemma \ref{lemma.induction}, to give
  \begin{align*}
  \Expect{\Phat(y_{2:t}|y_1)|\cF_{1}}&=
i_{1}\frac{\sum_{j=1}^{m_{1}-1}w_{1}^j\Prob{Y_{2:t}=y_{2:t}|x_{1}^j}}{\sum_{k=1}^{m_{1}-1}w_{1}^k}
+
(1-i_{1})\frac{\sum_{j=1}^{m_{1}}w_{1}^j\Prob{Y_{2:t}=y_{2:t}|x_{1}^j}}{\sum_{k=1}^{m_{1}}w_{1}^k}.
\end{align*}
  Given the form of $\Phat(y_1)$ (conditional on $i_1$) and the exchangeability of $W_1^j$ for $j \in \{1,\dots,M_1-1\}$ if $i_1=1$ and $j \in \{1,\dots,M_1\}$ if $i_1=0$, we can take expectations over $\cF_1$ to obtain
\begin{align*}
  \Expect{\Phat(y_{1:t})}&=\Expect{\Phat(y_1)\Phat(y_{2:t}|y_1)}
=
  \Expect{w_1(X_1^1)\Prob{Y_{2:t}=y_{2:t}|X_1^1}}\\
  &=
  \Expects{X_0\sim p_0,X_{(0,1]}\sim q(\cdot|X_0)}{\frac{p(X_{(0,1]})}{q(X_{(0,1]})}\Prob{Y_1=y_1|X_1}\Prob{Y_{2:t}=y_{2:t}|X_1}}\\
  &=
  \Expects{X_0\sim p_0,X_{(0,1]}\sim q(\cdot|X_0)}{\frac{p(X_{(0,1]})}{q(X_{(0,1]})}\Prob{Y_{1:t}=y_{1:t}|X_1}}
  =
  \Prob{Y_{1:t}=y_{1:t}}.~\square
  \end{align*}

\section{Proofs of Propositions \ref{prop.bound}, \ref{prop.GenOneStepBound} and \ref{prop.BoundProbMissS}}
\label{sec.proof.PropBound}

\subsection{Proof of Proposition \ref{prop.bound}}
Firstly, let $X=M_t-1$. We are interested in $\frac{1}{p^2}\Expect{\{(s-1)/X\}^2}=\frac{(s-1)^2}{p^2}\Expect{1/X^2}$. 

The random variable $X$ is a shifted version of a negative binomial random variable, and so it has a probability mass function of
\begin{equation}
  \label{eqn.pmfMm1}
\Prob{X=x}= \frac{x!}{(\ess-1)!(x+1-s)!}p^{\ess}(1-p)^{x+1-\ess},~~~(x=\ess-1,\ess,\ess+1,\dots),
\end{equation}
and $\Prob{X=x}=0$ otherwise. Thus
\[
\Expect{\frac{1}{X^2}}
=
\frac{p^{\ess}}{(1-p)^{\ess-1}(\ess-1)!}\sum_{x=\ess-1}^\infty\frac{(x-1)!}{x(x+1-\ess)!}(1-p)^x.
\]

\textbf{First case} ($\ess=2$): here
\[
\Expect{\frac{1}{X^2}}
=
\frac{p^2}{1-p}
\sum_{x=1}^\infty \frac{1}{x}(1-p)^x
=
-\frac{p^2}{1-p}\log(p),
\]
by matching the infinite sum to the Taylor expansion of $-\log[1-(1-p)]$.  Multipying by $(\ess-1)^2/p^2$ gives the required result.

\textbf{Second case} ($\ess=3$): here
\[
\Expect{\frac{1}{X^2}}
=
\frac{p^3}{2(1-p)^2}
A,
\]
where
\[
A=\sum_{x=2}^\infty \frac{x-1}{x}(1-p)^x
=
\sum_{x=2}^\infty (1-p)^x
-
\sum_{x=2}^\infty \frac{1}{x}(1-p)^x
=
\frac{(1-p)^2}{p}- \{-\log p - (1-p)\},
\]
where, for the second sum, we have matched the same Taylor expansion as when $\ess=2$. Multiplying by $(\ess-1)^2/p^2 \times p^3/\{2(1-p)^2\}=2p/(1-p)^2$ gives the exact sum.

To obtain the bounds, let 
\[
S:=\sum_{x=2}^\infty \frac{1}{x}(1-p)^x.
\]
Then, for $0<p\le 1$,
\[
\frac{(1-p)^2}{2}\le S\le \frac{(1-p)^2}{2}+\frac{(1-p)^3}{3}\left[\sum_{j=0}^\infty (1-p)^j\right]
=
\frac{(1-p)^2}{2}+\frac{(1-p)^3}{3p}.
\]
Thus
\[
\frac{(1-p)^2}{p}-\frac{(1-p)^2}{2}-\frac{(1-p)^3}{3p}
\le A
\le
\frac{(1-p)^2}{p}-\frac{(1-p)^2}{2}.
\]
Multiplying by $2p/(1-p)^2$ gives the bounds.

\textbf{General case}. Now,
\[
\frac{1}{X(X-1)}-\frac{a}{X(X-1)(X-2)}-\frac{1}{X^2}=\frac{(1-a)X-2}{X^2(X-1)(X-2)}.
\]
Since $X\ge s-1$, setting $a=1$ and then $a=1-2/(s-1)$,
\begin{equation}
  \label{eqn.base.ineq}
\frac{1}{X(X-1)}-\frac{1}{X(X-1)(X-2)}
<\frac{1}{X^2}
  \le\frac{1}{X(X-1)}-\frac{1-\frac{2}{s-1}}{X(X-1)(X-2)}.
  \end{equation}
 In particular, therefore, as well as $\Expect{1/X}=p/(s-1)$, which must be true for unbiasedness of the Alive Particle Filter, for $s\ge 3$,
\[
\Expect{\frac{1}{X(X-1)}}=\frac{p^2}{(s-1)(s-2)}
=
\frac{p^2}{(s-1)^2}\left(1+\frac{1}{s-2}\right).
\]
Also, for $s\ge 4$,
\begin{align*}
\Expect{\frac{1}{X(X-1)(X-2)}}&=\frac{p^3}{(s-1)(s-2)(s-3)}
=
\frac{p^2}{(s-1)^2}\times\frac{p(s-1)}{(s-2)(s-3)}\\
&=
\frac{p^2}{(s-1)^2}\times\frac{p}{s-2}\left(1+\frac{2}{s-3}\right).
\end{align*}
Thus, taking expectations in \eqref{eqn.base.ineq} gives,
\[
1+\frac{1-p(1+\frac{2}{s-3})}{s-2}
<
\frac{(s-1)^2}{p^2}\Expect{\frac{1}{X^2}}
<
1+\frac{1}{s-2}-\frac{p}{s-2}\left(1-\frac{2}{s-1}\right)\left(1+\frac{2}{s-3}\right),
\]
which simplifies to the required expression. $\square$

\subsection{Proof of Proposition \ref{prop.GenOneStepBound}}
For a general $c$, taking $S^j\gets cS^j$ and $\ess \gets c\ess$ reduces to the case $c=1$. Thus, we set $c=1$; reversing the above map for the final results.

Let $K\in\{0,1,2\}$ be as defined in the proof of Proposition \ref{prop.GenOneUnbiased} and let $A_K$, $K\in\{0,1,2\}$ be defined below:
$$
\Expect{P^2}=\underbrace{\Expect{P^2\ind{K=0}}}_{A_0}+\underbrace{\Expect{P^2\ind{K=1}}}_{A_1}+\underbrace{\Expect{P^2\ind{K=2}}}_{A_2}.
$$
Firstly, if $K=2$ then $\sum_{j=1}^{m_+} W^j<\ess$, so $P<\ess/m_+$. Also, $\Expect{P|K=2}=\Expect{P|P<\ess/m_+}\le \Expect{P}=p$ (see Proposition \ref{prop.GenOneUnbiased}). Thus,
$$
A_2\le \frac{\ess}{m_+}\Expect{P\ind{K=2}}
=
\frac{\ess}{m_+}\Expect{P|K=2}\Prob{K=2}
\le
\frac{\ess p}{m_+}\Prob{K=2}.
$$
Next, using the exchangeability of $W^1,\dots W^{M-1}$ when $K=1$,
$$
\begin{aligned}
A_1&=
\Expect{\left(\frac{1}{M-1}\sum_{j=1}^{M-1}W^j\right)^2\ind{K=1}}\\
&=
\Expect{\frac{1}{M-1}\sum_{j=1}^{M-1}W^j\left(\frac{1}{M-1}\sum_{j=1}^{M-1}W^j\right)\ind{K=1}}\\
&=
\Expect{W^1\frac{1}{M-1}\sum_{j=1}^{M-1}W^j~~\ind{K=1}}.
\end{aligned}
$$
Now, $W^1 \le w_*$ and when $K=1$, $\sum_{j=2}^{M-1}W^j \ge \ess-2w_*$, so
$$
\sum_{j=1}^{M-1}W^j=\sum_{j=2}^{M-1}W^j\times \left(1+\frac{W^1}{\sum_{j=2}^{M-1}W^j}\right)
\le
\sum_{j=2}^{M-1}W^j \times \left(1+\frac{w_*}{\ess-2w_*}\right).
$$
Thus
$$
\begin{aligned}
A_1 &\le \frac{\ess-w_*}{\ess-2w_*}\Expect{\frac{M-2}{M-1}~W^1~ \frac{1}{M-2}\sum_{j=2}^{M-1}W^j~\ind{K=1}}
\le
\frac{\ess-w_*}{\ess-2w_*}\Expect{W^1~ \frac{1}{M-2}\sum_{j=2}^{M-1}W^j~~\ind{K=1}}\\
&=
\frac{\ess-w_*}{\ess-2w_*}\Expect{W^1 W^2 ~~\ind{K=1}},
\end{aligned}
$$
by the exchangeability of the $W^j$.

Similarly,
$$
A_0=\Expect{\left(\frac{1}{M}\sum_{j=1}^{M}W^j\right)^2\ind{K=0}}
=
\Expect{W^1\frac{1}{M}\sum_{j=1}^{M}W^j~~\ind{K=0}}
$$
and
$$
\sum_{j=1}^{M}W^j
\le
\sum_{j=2}^{M}W^j \times \left(1+\frac{w_*}{\ess-w_*}\right).
$$
Thus
$$
A_0\le \frac{\ess}{\ess-w_*}\Expect{W^1 W^2 ~~\ind{K=0}}\le \frac{\ess-w_*}{\ess-2w_*}\Expect{W^1 W^2 ~~\ind{K=0}}.
$$

Combining the bounds for $A_0$, $A_1$ and $A_2$,
$$
\Expect{P^2}
\le
\frac{\ess-w_*}{\ess-2w_*}\Expect{W^1W^2\ind{K\ne 2}}
+
\frac{\ess p}{m_+}\Prob{K=2}.
$$
However, $\Expect{W^1W^2\ind{K\ne 2}}\le \Expect{W^1W^2}=\Expect{W^1}\Expect{W^2}$, by independence. Hence
$$
\Expect{P^2}
\le
\left(1+\frac{w_*}{\ess-2w_*}\right) p^2
+
\frac{\ess p}{m_+}\Prob{K=2},
$$
as required. 

\subsection{Proof of Proposition \ref{prop.BoundProbMissS}}
Firstly, $\Prob{\sum_{j=1}^{m_+}S^j<\ess}=\Prob{\sum_{j=1}^{m_+}W^j<c\ess}$. 

Next, $\Prob{\sum_{j=1}^{m_+}W^j -m_+ p< -t}\le  \Prob{\sum_{j=1}^{m_+}W^j -m_+ p\le -t}$. 
Now, $|W^j-\Expect{W^j}|<w_*$, so, whenever $m_+p>c\ess$, Bernstein's inequality provides
$$
\Prob{\sum_{j=1}^{m_+}W^j -m_+ p\le -t}\le \exp\left\{-\frac{\frac{1}{2}t^2}{m_+ \Var{W^1}+\frac{1}{3}w_*t}\right\}.
$$
Setting $t=m_+ p - c\ess$, we have
\[
\Prob{\sum_{j=1}^{m_+}W^j\le c\ess}\le \exp\left\{-\frac{\frac{1}{2}(m_+p-c\ess)^2}{m_+ \Var{W^1}+\frac{1}{3}w_*(m_+p-c\ess)}\right\}.
\]
Now $\Var{W^1}\le \Expect{(W^1)^2}\le w_*\Expect{W^1}=w_*p$, so,
\[
\Prob{\sum_{j=1}^{m_+}W^j\le c\ess}\le \exp\left\{-\frac{1}{w_*}~\frac{\frac{1}{2}(m_+p-c\ess)^2}{m_+ p+\frac{1}{3}(m_+p-c\ess)}\right\}.
\]
Setting $m_+p=\kappa c\ess$ for some $\kappa >1$ gives
$$
\Prob{\sum_{j=1}^{m_+}W^j\le c\ess}\le \exp\left\{-\frac{1}{w_*}~\frac{\frac{3}{2}(\kappa-1)^2 c\ess}{4\kappa -1}\right\}.
$$
For $\kappa\ge 7/4$, $(\kappa-1)^2-(4\kappa-1)(\kappa-7/4)/4=9/16>0$, so $(\kappa-1)^2/(4\kappa-1)> (\kappa-7/4)/4$, from which
$$
\Prob{\sum_{j=1}^{m_+}W^j\le c\ess}<
\exp\left\{-\frac{3(\kappa-\frac{7}{4}) c\ess}{8w_*}\right\}.
$$

\section{Proof of Theorem \ref{thrm.partialVariance}}
\label{sec.prove.partialVariance}
The relative variance is $E_2/p^2 -1$, where $E_2:=\Expect{\prod_{t=1}^T\Phat_t^2}$. 

We show that 
\begin{equation}
\label{eqn.inductive}
\Expect{\left(\prod_{s=1}^t\Phat_s^2\right)\left(\prod_{s=t+1}^T\Pbar_s^2\right)}
\le 
\left(1+\frac{1}{\ess-2}\right)
\Expect{\left(\prod_{s=1}^{t-1}\Phat_s^2\right)\left(\prod_{s=t}^T\Pbar_s^2\right)}.
\end{equation}
Applying this recursively from $t=T$ to $t=1$ gives
\[
E_2\le \left(1+\frac{1}{\ess-2}\right)^{T}\Expect{\prod_{t=1}^T\Pbar_t^2}.
\]
The result then follows since $1+\frac{1}{\ess-2}<\exp\{1/(\ess-2)\}$.

The left hand side (LHS) of \eqref{eqn.inductive} is
$$
\begin{aligned}
    \mathrm{LHS}&
    =
\Expect{\Expect{
\left(\prod_{s=1}^t\Phat_s^2\right)\left(\prod_{s=t+1}^T\Pbar_s^2\right)|\cF_{t-1}}}\\
&=
\Expect{\left(\prod_{s=1}^{t-1}\Phat_s^2\right)\Expect{\Phat_t^2\left(\prod_{s=t+1}^T\Pbar_s^2\right)|\cF_{t-1}}}\\
&=
\Expect{\left(\prod_{s=1}^{t-1}\Phat_s^2\right)\Expect{\Phat_t^2|\cF_{t-1}}\Expect{\left(\prod_{s=t+1}^T\Pbar_s^2\right)|\cF_{t-1}}}
~~~(\mbox{by S2})\\
&\le
\left(1+\frac{1}{\ess-2}\right)
\Expect{\left(\prod_{s=1}^{t-1}\Phat_s^2\right)\Pbar_t^2\Expect{\left(\prod_{s=t+1}^T\Pbar_s^2\right)|\cF_{t-1}}}~~~(\mbox{by \eqref{eqn.usefulInterim}})\\
&=
\left(1+\frac{1}{\ess-2}\right)
\Expect{\Expect{\left(\prod_{s=1}^{t-1}\Phat_s^2\right)\Pbar_t^2\left(\prod_{s=t+1}^T\Pbar_s^2\right)|\cF_{t-1}}}\\
&=
\left(1+\frac{1}{\ess-2}\right)
\Expect{\left(\prod_{s=1}^{t-1}\Phat_s^2\right)\Pbar_t^2\left(\prod_{s=t+1}^T\Pbar_s^2\right)}.~~~\square
\end{aligned}
$$

\section{On general measures of success}
\label{app.general.success}
When we have exact, full-vector or exact, partial observations the likelihood is $f(y_{i}|x_{t_i}^j)=1(h(x_{t_i}^j)=y_i)$, for some function $h$ using a subset or all of the components of $x$.

For the noisily observed process in Section \ref{sec.LVSDE} we used
\[
s_i^j=\frac{f(y_{i}|x_{t_i}^j)}{\sup_{x\in \cX}f(y_i|x)}.
\]
To explain why this is natural, it is helpful to define $w_i^j:=f(y_{i}|x_{t_i}^j)$ and $w_i^*:=\sup_{x\in \cX}f(y_i|x)$.

An alternative to simply defining $w_i^j$ as above, would be to use a random weight, $\Wtil_i^j$, where $\Wtil_i^j=w_i^*$ with probability $w_i^j/w_i^*$ and $\Wtil_i^j=0$ otherwise. Since $\Expect{\Wtil_i^j}=w_i^j$, this would still lead to an unbiased (albeit noisier) estimator of the likelihood. However, now we have a weight that is either $w^*_i$ or $0$, and it is natural to record the success as $\Stil_i^j=\Wtil_i^j/w^*_i\in \{0,1\}$. Then,
\[
\Expect{\Stil_i^j}=\frac{1}{w_i^*}\Expect{\Wtil_i^j}=\frac{w_i^j}{w_i^*}=\frac{f(y_{i}|x_{t_i}^j)}{\sup_{x\in \cX}f(y_i|x)}.
\]

\section{Further analysis of the case of partial observations}
\label{sec.more.partial}
We continue from Section \ref{sec.tune.partial}.

In the case of very large $\ess$, each $\zvec_t$, $t=1,\dots,T$ is, essentially, $\ess-1$ independent draws from the corresponding filtering distribution, which we denote by $g_t(z_t|x^*_{1:t})$. For moderately large $\ess-1$, we expect this to hold approximately, so that
\[
\begin{aligned}
\Expect{\prod_{t=1}^T \Pbar_t^2}
&\approx
\prod_{t=1}^T\Expect{\left(\frac{1}{\ess-1}\sum_{j=1}^{\ess-1}p_t(Z_t^j)\right)^2}
=
\prod_{t=1}^T\left(\Expect{p_t(Z_t)}^2+\frac{1}{\ess-1}\Var{p_t(Z_t)}\right)\\
&=
\left(\prod_{t=1}^T\Expect{p_t(Z_t)}\right)^2
\prod_{t=1}^T\left\{1+\frac{1}{\ess-1}\frac{\Var{p_t(Z_t)}}{\Expect{p_t(Z_t)}^2}\right\}\\
&\le
\left(\prod_{t=1}^T\Expect{p_t(Z_t)}\right)^2
\exp\left\{\frac{1}{\ess-1}\sum_{t=1}^T\frac{\Var{p_t(Z_t)}}{\Expect{p_t(Z_t)}^2}\right\}
\end{aligned},
\]
where, for each $t=1,\dots,T$, $Z_t,Z_t^1,\dots,Z_t^{\ess-1}$ are independent draws from $g_t(\cdot|x^*_{1:t})$.
Noting $\Expect{\prod_{t=1}^T \Pbar_t}\approx \left(\prod_{t=1}^T\Expect{p_t(Z_t)}\right)$ leads to an approximate bound on $\Vrel$ of
\[
\exp\left\{
\frac{T}{\ess-2}
+
\frac{1}{\ess-1}\sum_{t=1}^T\frac{\Var{p_t(Z_t)}}{\Expect{p_t(Z_t)}^2}
\right\}
-1.
\]
When compared with \eqref{eqn.VrelExact}, the additional term might be expected to dominate if the sum of the relative variances of the $p_t(Z_t)$ is larger than $T$.

\section{Further discussion}
\label{app.furtherDiscussion}
In the special case of partial but exact observations where at time $t$, $\ess$ is achieved after $M$ simulations, the transition probability must be estimated as $(\ess-1)/(M-1)$; however, the pool for resampling and then propagating can, in fact, consist of all $\ess$ successes, rather than the first $\ess-1$. This is because all $\ess$ of the particles' weights are $1$, so they are indistinguishable. This does not apply in the case of general weights and provides only a very minor increase in efficiency, so it was not mentioned when describing Algorithm \ref{FullFrank}.

Algorithms \ref{OneStepFrank} and \ref{FullFrank} could be further generalised to allow a cost $c^j$ to be associated with each simulation, and to stop if a certain cost budget, $\ecc$, has been exceeded. For example, in Algorithm \ref{OneStepFrank}, the condition on Line 4 would change to $m<m_+$ \textbf{and}  $\sum_{j=1}^{m-} s^j < \ess$ \textbf{and} $\sum_{j=1}^{m-}c^j< \ecc$ and the condition on Line 8 to $m=m_-$ \textbf{or} $\{\sum_{j=1}^{m-} s^j < \ess$ \textbf{and} $\sum_{j=1}^{m-}c^j< \ecc\}$. One could also set $m_+=\infty$, effectively replacing the maximum number of iterations with the cost budget. Alternatively, the algorithm could be generalised to simulate in batches of size $b$ throughout, where $b$ is the number of cores available and $m_-$ is a multiple of $b$.

The Frankenfilter Algorithm \ref{FullFrank} could also be applied within a correlated pseudo-marginal MCMC scheme \cite[]{DDP2018}. Furthermore, Algorithms \ref{BaseFrank} and \ref{OneStepFrank} could be employed within the augmented correlated pseudo-marginal approach of \cite{GolSher2022}, potentially reducing the required number of successes considerably. 

\subsection{Addendum to Section \ref{sec.TuneExact}}
\label{sec.TuneExactAddendum}
As an addendum to Section \ref{sec.TuneExact}, if we \emph{did} have a representative parameter value and we \emph{could} somehow estimate each $p_t$ for this parameter, then we could choose to try to minimise the expected number of simulations, $\sum_{t=1}^T\ess_t/p_t$, subject to obtaining a target $\Vrel$; \emph{i.e.}, subject to $\sum_{t=1}^T(1-p_t)/(\ess_t-2)=\log (1+\Vrel)$.  The method of Lagrange multipliers gives a solution of $\ess_t-2\propto \sqrt{p_t(1-p_t)}$. Of course, we could also use a standard particle filter, aiming to minimise the total number of simulations $\sum_{t=1}^T n_t$, subject to a fixed relative variance $\approx \exp\left\{\sum_{t=1}^T(1-p_t)/(n_tp_t)\right\}-1$, which suggests setting $n_t\propto \sqrt{(1-p_t)/p_t}$. These observation-specific sets of tunings are exactly the kind of user-unfriendly course of action that the Frankenfilter is designed to obviate, so we did not pursue this possibility further.

\end{document}